\documentclass{amsart}

\title[Singular KP solitons and the Go-diagrams]
{On singular solitons of the KP equation and the Go-diagrams}
\author{Shilong Huang}
\date{\today}
\address{College of Mathematics and Systems Science, Shandong University of Science and Technology, Qingdao, 266590, China}
\email{shilonghhqu@163.com}
\subjclass[2000]{}

\usepackage{url}
\usepackage{color}
\usepackage{rotating}
\newcommand{\rotxc}[1]{\begin{sideways}#1\end{sideways}}
\newcommand{\invert}[1]{\rotxc{\rotxc{#1}}}

\def\Le{\hbox{\invert{$\Gamma$}}}
\countdef\x=23
\countdef\y=24
\countdef\z=25
\countdef\t=26

\def\tbox(#1,#2)#3{
\x=#1 \y=#2
\multiply\x by 12
\multiply\y by 12
\z=\x \t=\y
\advance\z by 12
\advance\t by 12
\psline(\x,\y)(\x,\t)(\z,\t)(\z,\y)(\x,\y)
\advance\x by 6
\advance\y by 6
\rput(\x,\y){{\bf #3}}}

  \usepackage{paralist}
  \usepackage{graphics} 
  \usepackage{epsfig} 
\usepackage{graphicx}
\usepackage{epstopdf}
\usepackage{subfigure}
\usepackage{array}

\usepackage{float}
\usepackage{caption}
\usepackage{CJK}

\usepackage{amssymb}
\usepackage{graphicx}
\usepackage{epstopdf}
\usepackage{amscd}
\usepackage{appendix}
\usepackage{graphics}
\usepackage{tikz}

\usepackage{mathdots}
\def\proof{\par{\it Proof}. \ignorespaces}
\def\endproof{{\ \vbox{\hrule\hbox{%
     \vrule height1.3ex\hskip0.8ex\vrule}\hrule }}\par}

\theoremstyle{definition}

\theoremstyle{remark}

\numberwithin{equation}{section}

\let\trueint=\int
\let\truesum=\sum
\def\int{\mathop{\textstyle\trueint}\limits}
\def\sum{\mathop{\textstyle\truesum}\limits}
\let\<=\langle
\let\>=\rangle

\def\Gr{\text{Gr}}

\def\v{\mathbf{v}}
\def\w{\mathbf{w}}
\def\t{\mathbf{t}}
\def\0{\mathbf{0}}

\def\edge{\ar@{-}}
\def\dedge{\ar@{.}}

\usepackage{amsmath, amsthm, amssymb, amsbsy,amsfonts,amscd}
\usepackage[mathscr]{eucal}
\usepackage{epsfig}
\usepackage{young}
\newtheorem{theorem}{Theorem}[section]

\newtheorem{definition}[theorem]{Definition}
\newtheorem{proposition}[theorem]{Proposition}
\newtheorem{lemma}[theorem]{Lemma}

\newtheorem{example}[theorem]{Example}
\newtheorem{corollary}[theorem]{Corollary}

\newtheorem{remark}[theorem]{Remark}

\usepackage{amsfonts}
\usepackage{xy}
\usepackage{amssymb}
\usepackage{amsmath}

\newcommand{\R}{\mathbb R}

\DeclareMathOperator{\In}{in}
\DeclareMathOperator{\Out}{out}

\newcommand{\thmrefer}[1]{\renewcommand\thetheorem
  {\protect\ref{#1}}\addtocounter{theorem}{-1}}

\xyoption{all}
\CompileMatrices

\begin{document}

\begin{abstract}
It has been proven that real and regular soliton solutions of the KP equation are classified in terms of the totally nonnegative Grassmannian. It is well known that vertex operators can be used to construct soliton solutions. In this paper, we consider several regular soliton solutions and study their combinations through products of vertex operators. In general, the resulting solutions become \emph{singular}. Totally nonnegative elements are parametrized by the $\Le$-diagrams introduced by Postnikov. We show that the resulting singular solutions can be parametrized by Go-diagrams, which extend $\Le$-diagrams and arise in the Deodhar decomposition of the Grassmannian.
\end{abstract}

\maketitle

\tableofcontents

\section{Introduction}
The Kadomtsev-Petviashvili (KP) equation
\begin{equation}\label{eq:KP}
(-4u_{t}+6uu_{x}+u_{xxx})_{x}+3u_{yy}=0
\end{equation}
admits soliton solutions expressed in terms of a $\tau$-function
\begin{equation}\label{eq:u-tau}
u(x,y,t)=2\left(\ln \tau(x,y,t)\right)_{xx}.
\end{equation}
Each KP soliton solution is determined by a pair of data $(\kappa,A)$, where $\kappa=(\kappa_1<\kappa_2<\cdots<\kappa_M)$ is an ordered set of spectral parameters and $A$ is an $N\times M$ full rank matrix. The associated $\tau$-function can be expressed as
\begin{equation*}
\tau(x,y,t)=\sum_{I}\Delta_I(A)E_I(x,y,t),
\end{equation*}
where $I$ runs over all $N$-element subsets of $[M]:=\{1,\ldots,M\}$, $\Delta_I(A)$ denotes the Pl\"ucker coordinate indexed by $I$, namely the corresponding maximal minor of $A$, and
\begin{equation*}
E_I(x,y,t)=\prod_{k>l}(\kappa_{i_k}-\kappa_{i_l})E_{i_1}\cdots E_{i_N},\qquad E_{j}(x,y,t)=e^{\xi_{j}(x,y,t)}=e^{\kappa_j x+\kappa_j^2 y+\kappa_j^3 t}.
\end{equation*}
This gives a parametrization of KP soliton solutions by the Grassmannian $\Gr(N,M)$, i.e. the space of $N$-dimensional subspaces of $\mathbb{R}^M$ \cite{CK:08,CK:09}.

It is known in \cite{KW:13} that a KP soliton solution is real and regular \emph{if and only if} $A\in\Gr(N,M)_{\geq0}$, where $\Gr(N,M)_{\ge0}$ denotes the totally nonnegative Grassmannian, i.e. all Pl\"ucker coordinates of $A$ are nonnegative. As shown in \cite{CK:08, KW:14}, the asymptotic structure of such solutions is encoded by a permutation $\pi(A)\in S_M$, which determines the pairing of line-solitons. These line-solitons are labeled by pairs $(i,j)$ with $i<j$, and are called line-solitons of type $[i,j]$.

Regular KP soliton solutions admit a combinatorial description via \emph{$\Le$-diagrams}, namely decorated Young diagrams with certain boxes filled by circles, which parametrize the positroid cells in Postnikov's stratification of the totally nonnegative Grassmannian \cite{P:06}. The rows and columns of a $\Le$-diagram are labeled by the pivot and nonpivot indices of $A$, respectively, so that each box is indexed by a pair $(i,j)$ with $i<j$. Moreover, these diagrams satisfy the \emph{$\Le$-property}: if a box contains a circle, then all boxes weakly to its left or above also contain circles \cite{KW:13}.

However, singular KP soliton solutions are much less understood. For the KdV equation, i.e. the $2$-reduction of the KP hierarchy, combinations of regular one-soliton solutions are always regular. Surprisingly, for the Boussinesq equation corresponding to the $3$-reduction, combining two regular one-soliton solutions can produce a singular solution \cite{HY:24}. This is the first case where total nonnegativity is lost under combinations of regular soliton solutions. As shown in \cite{HY:24}, \emph{vertex operators} can be used to construct regular soliton solutions, and their combinations are realized by products of vertex operators. In that work, the regularity of such combinations was characterized by a non-crossing condition.

In this paper, we study the singular case arising when this condition fails. More precisely, we study combinations of regular KP soliton solutions and give a combinatorial description of the resulting singular solutions. We begin with the simplest case of one-soliton solutions. Let $V_{ij}=e^{a_{ij}X_{ij}}$ be the vertex operator associated with the regular line-soliton $[i,j]$ with $a_{ij}>0$. The corresponding $\tau$-function is given by
\begin{align*}
\tau=V_{ij}\cdot 1=1+e^{\xi_{j}-\xi_{i}+\ln a_{ij}},
\end{align*}
where
\begin{equation*}
X_{ij}=\exp(\xi_{j}-\xi_{i})\exp\left(-\left(\tilde{\partial}(\kappa_{j})-\tilde{\partial}(\kappa_{i})\right)\right),\qquad \tilde{\partial}(\kappa_{k})=\kappa_{k}^{-1}\partial_{x}+\frac{1}{2}\kappa_{k}^{-2}\partial_{y}+\frac{1}{3}\kappa_{k}^{-3}\partial_{t}.
\end{equation*}
The corresponding soliton is represented in the Grassmannian by $A=(1,a_{ij})\in\Gr(1,2)_{\geq0}$, where \(i\) is the pivot index and \(j\) is the nonpivot index. We now consider two regular line-solitons of types $[i,j]$ and $[k,l]$, where $i,j,k,l$ are pairwise distinct. Their combination gives the two-soliton $\tau$-function
\begin{equation*}
\tilde{\tau}=V_{ij}\cdot V_{kl}\cdot 1=1+e^{\xi_{j}-\xi_{i}+\ln a_{ij}}+e^{\xi_{l}-\xi_{k}+\ln a_{kl}}+\frac{(\kappa_{i}-\kappa_{k})(\kappa_{j}-\kappa_{l})}{(\kappa_{i}-\kappa_{l})(\kappa_{k}-\kappa_{j})}e^{(\xi_{j}-\xi_{i})+(\xi_{l}-\xi_{k})+\ln a_{ij}+\ln a_{kl}}.
\end{equation*}
The coefficient of the last term is the cross-ratio factor arising from the product of the two vertex operators (see Section~\ref{sec:3} for details). Its sign depends on the relative ordering of $i,j,k,l$. Without loss of generality, assume that $i<k$. Then there are three possible relative positions of the pairs $(i,j)$ and $(k,l)$ shown below.
\begin{figure}[H]
  \begin{minipage}[t]{1\linewidth}
  \centering
  \includegraphics[height=1.5cm,width=11cm]{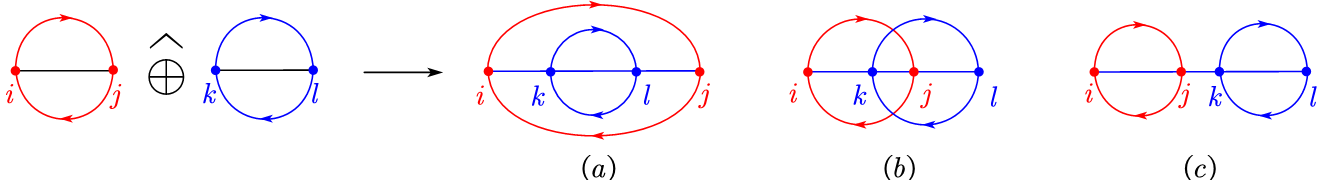}
  \end{minipage}
\end{figure}
\noindent
The associated permutation is $\pi=(i,j)(k,l)$, and the cross-ratio factor is negative only in the crossing case (b), namely $i<k<j<l$. This implies that case (b) produces a singular solution, whereas cases (a) and (c) give regular solutions.

We now illustrate these three cases diagrammatically. Since a one-soliton corresponds to a single empty box, the relative positions of the boxes associated with $(i,j)$ and $(k,l)$ force the appearance of additional boxes, which are filled with circles as follows.
\begin{figure}[H]
  \begin{minipage}[t]{1\linewidth}
  \centering
  \includegraphics[height=1.6cm,width=10.8cm]{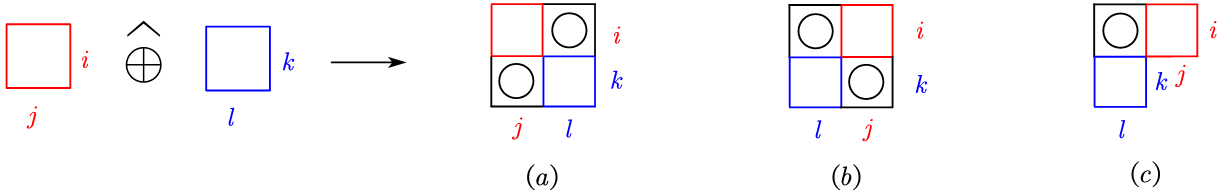}
  \end{minipage}
\end{figure}
\noindent In cases (a) and (c), the $\Le$-property is satisfied, and hence these diagrams are $\Le$-diagrams. Accordingly, all circles are represented by white stones. In contrast, the $\Le$-property fails in case (b), so the crossing case is not captured by $\Le$-diagrams. This failure raises a natural question: how can one describe combinatorially the crossing case (b), which corresponds to a singular KP soliton solution? As we will show in Section~\ref{sec:4}, this case is characterized by \emph{Go-diagrams} \cite{KW:13}, which extend $\Le$-diagrams. In Go-diagrams, circles are represented by either white or black stones, and determining their types is one of the main goals of this paper.

The above construction extends naturally to combinations of more general regular KP soliton solutions. Each regular soliton corresponds to a $\Le$-diagram. After embedding the corresponding $\Le$-diagrams into the Young diagram determined by the permutation of the combined solution, their relative positions force the appearance of additional boxes. The resulting diagram is called a composite diagram. We show that these additional boxes admit canonical fillings compatible with the \emph{Deodhar decomposition} of the Grassmannian, which is a refinement of the Bruhat decomposition \cite{Deodhar,MR:04}. Our main theorem (Theorem~\ref{th:main}) characterizes these canonical fillings.
\begin{theorem}
Let
\begin{align*}
\tilde{A}=\widehat{\bigoplus}_{k=1}^KA_{k}\in\text{Gr}(N,M)
\end{align*}
be the $\kappa$-direct sum of totally nonnegative matrices $A_{k}\in\text{Gr}(n_{k},m_{k})_{\geq0}$ with $N=n_{1}+\ldots+n_{K}$ and $M=m_{1}+\ldots+m_{K}$. The associated permutation satisfies
\begin{align*}
\pi(\tilde{A})=\prod_{k=1}^{K}\pi(A_{k}).
\end{align*}
Consider the composite diagram obtained by embedding the $\Le$-diagram associated with each $A_k$ into the Young diagram determined by $\pi(\tilde{A})$. Then this composite diagram canonically determines a Go-diagram, characterized as follows:
\begin{itemize}
\item the embedded $\Le$-diagrams associated with the $A_k$ remain unchanged;
\item boxes whose south-east corner lies on the boundary of the composite diagram, together with the consecutive boxes to their left or above, are filled with white stones;
\item the remaining boxes are determined by pairs of line-solitons $[i,j]$ and $[k,l]$ with $i<k$. Each such box is filled with a stone: it is black if $i<k<j<l$ and white otherwise.
\end{itemize}
\end{theorem}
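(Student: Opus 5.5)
We split the argument into the permutation identity and the identification of the Go-diagram, and in both we use the Deodhar parametrization of Marsh--Rietsch \cite{MR:04} as reformulated for Grassmannians by Kodama--Williams \cite{KW:13}. There a Go-diagram $D$ of shape $\lambda$ encodes a reduced word for the Grassmannian permutation $w_\lambda$ together with a distinguished subexpression. Empty boxes are factors $x_i(t)$ with $t\neq 0$, white stones are factors $s_i$ that are kept, and black stones are factors $y_i(-t)\dot s_i^{-1}$. The permutation of the cell is the product $v$ of the chosen factors.

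\textbf{Step 1 (permutation).} Write the $\kappa$-direct sum $\tilde A$ as a product of vertex operators. Each regular block $A_k$ is generated from $1$ by products of the $V_{ij}$ attached to its own index set. The asymptotic pairings of line-solitons are read off from the dominant exponentials as $y\to\pm\infty$. Because the spectral parameter sets of the blocks are disjoint, the pivot and nonpivot sets of $\tilde A$ are the unions of those of the $A_k$. Moreover, the leading terms in each asymptotic region factor blockwise, since the cross-ratio factors only rescale coefficients and never cancel a leading monomial. Hence $\pi(\tilde A)$ acts on each block's index set as $\pi(A_k)$ does, which gives $\pi(\tilde A)=\prod_k\pi(A_k)$. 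These permutations have disjoint supports, so they commute.

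\textbf{Step 2 (placing the boxes).} Let $\lambda$ be the Young diagram determined by the pivot set of $\tilde A$. Each box $(i,j)$ of $\lambda$ carries a pivot row $i$ and a nonpivot column $j$ with $i<j$. There are three kinds of box.
\begin{itemize}
\item Both $i$ and $j$ come from the same $A_k$. These boxes reproduce the $\Le$-diagram of $A_k$, since restricting a reduced word of $w_\lambda$ to a block's indices gives a reduced word for that block's Grassmannian permutation.
\item The box lies on a boundary hook chain, that is, its south-east corner is on the boundary of $\lambda$, or it is one of the consecutive boxes reached from such a box to its left or above. We claim that in the Deodhar (positive distinguished) subexpression these factors are forced to be $s_i$, because their removal would violate distinguishedness. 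We prove this by induction along the hooks, as in Postnikov's proof that the Le-property is forced \cite{P:06}. These boxes are therefore white stones.
\item The box is a mixed interior box. It then corresponds to an interaction between a soliton $[i,j]$ of one block and a soliton $[k,l]$ of another block.
\end{itemize}

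\textbf{Step 3 (colouring the mixed boxes).} Compute the Deodhar torus parameters of $\tilde A$ by pairing the factorization with the vertex-operator product. A mixed box contributes a factor whose sign equals the sign of the cross-ratio
\[
\frac{(\kappa_i-\kappa_k)(\kappa_j-\kappa_l)}{(\kappa_i-\kappa_l)(\kappa_k-\kappa_j)}.
\]
This is the sign of the corresponding Plücker coordinate relative to the product of the block minors. We first treat the two-soliton case and verify it directly from the introduction's computation.
\begin{itemize}
\item In the non-crossing cases (a) and (c) the cross-ratio is positive. The factor is then a positive $s_i$, so the box is a white stone.
\item In the crossing case $i<k<j<l$ the cross-ratio is negative. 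Realizing this sign requires the factor $y_i(-t)\dot s_i^{-1}$, so the box is a black stone.
\end{itemize}
For the general case we reduce to this computation. The product of vertex operators factorizes into pairwise cross-ratio factors, so each mixed box sees exactly one pair $([i,j],[k,l])$. This reduction is the main obstacle. To carry it out we must construct a bijection between the mixed boxes and the relevant soliton pairs, and show that the Marsh--Rietsch parametrization recovers exactly these signs. We will attempt both by induction on $K$, adding one block at a time and tracking how the reduced word of $w_\lambda$ extends.
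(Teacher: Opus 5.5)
Your Step~1 is fine: it is the block structure of the Pl\"ucker coordinates of $\tilde A$. The gap is in Step~3, exactly where you locate the ``main obstacle''. The general colouring rule \emph{is} the theorem, and you leave it to an induction on $K$ that is never carried out.

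More seriously, the mechanism you propose cannot work, because in the Deodhar decomposition the colour of a stone is not a sign. Once the composite diagram fixes which boxes are empty (those of the embedded $\Le$-diagrams) and which are filled (all others), the subexpression $\v^{\oplus}$ of $\w^{\oplus}$ is completely determined. A filled box at position $k$ is then black iff $\ell(v^{\oplus}_{(k-1)}s_{j_k})<\ell(v^{\oplus}_{(k-1)})$. This is a purely combinatorial condition and does not depend on the values of $\kappa$ or of the nonzero entries.

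In the Marsh--Rietsch parametrization:
\begin{itemize}
\item a white stone is the bare factor $\dot s_{i}$, which carries no parameter;
\item the black-stone factor ($x_i(m)\dot s_i^{-1}$ in their convention) carries $m\in\mathbb{R}$ of either sign, including $0$;
\item a negative sign is perfectly compatible with an empty box: $(1,a)$ and $(1,-a)$, $a>0$, lie in the same Deodhar component of $\mathrm{Gr}(1,2)$, whose Go-diagram is a single empty box.
\end{itemize}
So ``the factor is a positive $s_i$'' has no meaning, and ``a negative cross-ratio requires a black-stone factor'' is false. Also, the permutation of a component is $vw^{-1}$, not $v$.

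You also never connect the Deodhar component of $\tilde A$ to the composite diagram. Restricting the reduced word $\w$ to a block says nothing about the subexpression, so nothing shows that embedded boxes keep their $\Le$-filling or that mixed boxes are nonempty. The paper avoids this by \emph{defining} $\v^{\oplus}$ from the composite diagram and checking that it is distinguished (Proposition~\ref{pro:dis}).

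What the paper does instead has two parts. First, it uses the pipe dream of the composite diagram (Lemma~\ref{lem:pipe}) to attach to each filled box the two pipes through it, hence a pair of line-solitons $[i,j]$, $[k,l]$. This is the box-to-pair correspondence you leave open. Second, it uses a sign statement that is actually invariant: the four-term relation around the resulting $X$-crossing. With $I_1=J\cup\{i,k\}$, $I_2=J\cup\{k,j\}$, $I_3=J\cup\{j,l\}$, $I_4=J\cup\{i,l\}$, the three-term Pl\"ucker relation gives
\[
\Delta_{I_1}\Delta_{I_3}=\pm\,\Delta_{I_2}\Delta_{I_4},
\]
with the minus sign exactly when $i<k<j<l$ (Theorem~\ref{th:PC}). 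It is this relation, not the sign of a single factor, that the paper (via Kodama--Williams) ties to the stone colour (Corollary~\ref{cor:wb}). Degenerate configurations with coinciding indices give the plus sign, hence white stones (Proposition~\ref{prop:boundary}).

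Your cross-ratio intuition is correct only in this four-term form: for two solitons, $C_{[(i,j),(k,l)]}$ is exactly the ratio $c_{I_1}c_{I_3}/(c_{I_2}c_{I_4})$ of the $\tau$-coefficients. A self-contained alternative would be to show that a filled box is black iff $v^{\oplus}_{(k-1)}(j_k)>v^{\oplus}_{(k-1)}(j_k+1)$, i.e.\ the two strands meeting there are already inverted, and then determine for which soliton pairs this occurs.

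Finally, your Step~2 argument is wrong. A box whose south-east corner lies on the boundary is the first occurrence of its simple reflection, so $v_{(k-1)}s_{j_k}>v_{(k-1)}$ automatically, and distinguishedness never forces it to be chosen. For example, the single box of $(1,a)\in\mathrm{Gr}(1,2)_{\ge0}$ is an empty boundary box. Mixed boundary boxes are white because they are filled by construction, and the step is then necessarily length-increasing. The consecutive boxes to their left or above are in general not first occurrences and need a separate argument. Postnikov's $\Le$-property does not supply it, since it concerns totally nonnegative cells and $\v^{\oplus}$ is not positive.
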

This theorem shows that combinations of regular soliton solutions can be parametrized by Go-diagrams.

The paper is organized as follows. In Section \ref{sec:2}, we review the Grassmannian description of KP soliton solutions, emphasizing the role of Pl\"ucker signs in determining regularity. In Section \ref{sec:3}, we introduce the vertex operator construction and the associated $\kappa$-direct sum. We explain how cross-ratio factors govern the signs of the Pl\"ucker coordinates of the $\kappa$-direct sum, leading to singular solutions. In Section \ref{sec:4}, we construct and analyze the composite diagram associated with the $\kappa$-direct sum via the Deodhar decomposition, and establish its canonical Go-diagram filling. Finally in Section \ref{sec:5}, we present examples of the Go-diagram construction, including Hirota $N$-soliton solutions with associated permutation of the form $\pi=\prod_{k=1}^{N}(i_k,j_k)$, and discuss the inverse problem for the $\kappa$-direct sum.

\medskip

\section{Grassmannian description of KP soliton solutions}\label{sec:2}
In this section, we recall the Grassmannian description of real KP soliton solutions, focusing on the expansion of the $\tau$-function, the role of Pl\"ucker signs in determining regularity, and the permutation encoding the asymptotic pairing of line-solitons.

Let $\kappa=(\kappa_1<\ldots<\kappa_M)\in\R^M$ and $A$ be an $N\times M$ real matrix with full rank. We define
\begin{align*}
E_j(x,y,t)=e^{\xi_j(x,y,t)}\quad\quad\text{with}\quad\quad \xi_{j}=\kappa_{j}x+\kappa_{j}^{2}y+\kappa_{j}^{3}t.
\end{align*}
For each $N$-element subset $I=\{i_1<i_2<\ldots<i_N\}\subset[M]:=\{1,2,\ldots,M\}$, we set
\begin{equation}\label{eq:EI}
E_{I}(x,y,t)=\prod_{k<l}(\kappa_{i_l}-\kappa_{i_k})
\exp\left(\xi_{i_1}+\cdots+\xi_{i_N}\right),
\end{equation}
where the Vandermonde factor is positive since $\kappa_1<\ldots<\kappa_M$. Thus each term $E_I(x,y,t)>0$ for all $(x,y,t)$, and the $\tau$-function associated with $(\kappa,A)$ admits the expansion
\begin{equation}\label{eq:tauPl}
\tau(x,y,t)=\sum_{I\in\binom{[M]}{N}}\Delta_{I}(A)E_{I}(x,y,t),
\end{equation}
where $\Delta_I(A)$ denotes the Pl\"ucker coordinates of $A$ indexed by $I$. The signs of the terms in the $\tau$-function are entirely determined by the Pl\"ucker coordinates $\Delta_I(A)$, which in turn control the regularity of the solution.

\begin{example} A one-soliton solution of the KP equation \eqref{eq:KP} is determined by a pair $\kappa=(\kappa_i,\kappa_j)$ and $A=(1,a)\in\text{Gr}(1,2)$.
\begin{itemize}
\item[(a)] If $a>0$, the $\tau$-function is given by
\begin{align*}
\tau(x,y,t)=E_i(x,y,t)+aE_j(x,y,t)=2e^{\frac{\xi_{i}+\xi_{j}+\ln a}{2}}\cosh(\frac{\xi_i-\xi_j-\ln a}{2}).
\end{align*}
This gives a regular line-soliton of type $[i,j]$.
\item[(b)] If $a<0$, the $\tau$-function becomes
\begin{align*}
\tau(x,y,t)=E_i(x,y,t)+aE_j(x,y,t)=2e^{\frac{\xi_{i}+\xi_{j}+\ln |a|}{2}}\sinh(\frac{\xi_i-\xi_j-\ln |a|}{2}).
\end{align*}
Since the $\tau$-function vanishes along a line, the corresponding solution is singular.
\end{itemize}
More generally, a line-soliton $[i,j]$ arises from the balance between two adjacent dominant exponential terms indexed by $I$ and $J$ with $I\setminus J=\{i\}$ and $J\setminus I=\{j\}$. It is regular if $\Delta_I(A)\Delta_J(A)>0$, and singular otherwise.
\end{example}
It is known that real and regular soliton solutions arise if and only if the matrix $A\in\text{Gr}(N,M)_{\ge0}$ belongs to the \emph{totally nonnegative} Grassmannian \cite{KW:13}, defined by
\begin{align*}
\Gr(N,M)_{\ge0}:=\left\{A\in \Gr(N,M):\Delta_I(A)\geq0~\text{for all}~I\in\binom{[M]}{N}\right\}.
\end{align*}
Every $A\in\text{Gr}(N,M)$ admits a unique representative in reduced row echelon form (RREF). Throughout this paper we assume that $A$ is irreducible in RREF, meaning that each row contains a nonzero entry besides the pivot and there are no zero columns. The pivot set determines the lexicographically minimal nonzero Pl\"ucker coordinate \cite{K:17}.

For a regular soliton solution of the KP equation, the dominant exponentials of the $\tau$-function determine the asymptotic line-solitons. The corresponding asymptotic pairing is encoded by a permutation $\pi(A)$, characterized by the following theorem \cite{CK:08, CK:09}.
\begin{theorem}\label{th:pairing}
Let $\{i_1,\ldots,i_N\}$ be the pivot set and $\{j_1,\ldots,j_{M-N}\}$ the nonpivot set of $A\in\text{Gr}(N,M)_{\geq0}$.
There exists a unique permutation $\pi(A)\in S_M$ such that
\begin{itemize}
\item[(a)] For $y\gg 0$, there are $[i_n,\pi(i_n)]$-solitons with $\pi(i_n)>i_n$ for $n=1,\ldots,N$.
\item[(b)] For $y\ll 0$, there are $[\pi(j_m),j_m]$-solitons with $\pi(j_m)<j_m$ for $m=1,\ldots,M-N$.
\end{itemize}
\end{theorem}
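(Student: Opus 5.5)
This is the classical result of Chakravarty--Kodama, and I would prove it from the asymptotic analysis of the dominant exponentials in the expansion \eqref{eq:tauPl}. Fix a direction $x = cy$ with $t$ fixed and let $y\to\pm\infty$. On that ray the exponent of $E_I$ is
\[
\sum_{i\in I}\bigl(\kappa_i c+\kappa_i^2\bigr)y + O(1).
\]
Since $A\in\Gr(N,M)_{\ge0}$, all $\Delta_I(A)\ge0$, so no cancellation can occur. Hence the dominant term among those with $\Delta_I(A)>0$ is the one maximizing $\sum_{i\in I}\varphi_c(i)$, where $\varphi_c(i)=\kappa_i c+\kappa_i^2$. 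A line-soliton appears exactly at a value $c$ where the dominant set changes from $I$ to $J$ with $|I\triangle J|=2$, as in the Example. Its type is $[i,j]$ with $\{i,j\}=I\triangle J$. The first step is therefore to describe, for $y\gg0$ and for $y\ll0$, the sequence of dominant index sets as $c$ runs from $-\infty$ to $\infty$.

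The second step uses the matroid (positroid) structure. The set $\{I:\Delta_I(A)>0\}$ is the set of bases of a matroid. For a generic linear functional, the maximizing basis is obtained greedily. As $c$ varies monotonically, the weights $\varphi_c(i)$ have successive pairwise ties. By the basis exchange property, the dominant basis changes by exactly one exchange $i\leftrightarrow j$ at each such tie. For $y\gg0$ and $c\to-\infty$, the ordering of the weights means the dominant basis is the lexicographically minimal basis, which is the pivot set $\{i_1,\ldots,i_N\}$ (the fact from \cite{K:17}). As $c\to+\infty$ it becomes the lexicographically maximal basis. I would then show that each pivot $i_n$ is exchanged out exactly once, for some larger index $j$. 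Setting $\pi(i_n):=j$ defines an injective map from the pivots into $[M]$ with $\pi(i_n)>i_n$. The same argument with $y\ll0$ gives the solitons $[\pi(j_m),j_m]$ attached to the nonpivots.

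The third step is to check that these partial maps glue into a single permutation of $[M]$, and that it is unique. Uniqueness is immediate, since the soliton types are determined by $\tau$. For the gluing, I would identify $\pi$ with Postnikov's decorated permutation of the positroid. The $y\gg0$ chain of dominant bases coincides with the Grassmann necklace $I_1,\ldots,I_M$ read cyclically, and the $y\ll0$ chain with the rest of it. The rule $I_{k+1}=I_k\setminus\{k\}\cup\{\pi(k)\}$ then supplies both the images of the pivots (which exceed them) and the images of the nonpivots (which are smaller). Irreducibility of $A$ (no zero columns, and each row has a nonzero entry besides its pivot) excludes fixed points, so every index is genuinely paired.

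The main obstacle is the second and third steps: proving that each index leaves the dominant set exactly once along the $c$-sweep, and that the two half-plane chains match the necklace. I would try first to compute the exchanges directly, via the greedy algorithm on the ordering of the $\varphi_c$, at each crossing value $c_{ij}=-(\kappa_i+\kappa_j)$. I would then verify the necklace identification, falling back on Postnikov's results \cite{P:06} where needed.
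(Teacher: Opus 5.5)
The paper does not prove this statement; it quotes it from Chakravarty--Kodama \cite{CK:08,CK:09}, so your proposal has to stand on its own. Your set-up is sound. Along $x=cy$ the dominant term is the greedy maximiser of $\sum_{i\in I}\varphi_c(i)$ over the bases $\{I:\Delta_I(A)\neq0\}$. The only critical directions are $c_{ij}=-(\kappa_i+\kappa_j)$, and at each of them the dominant basis changes by at most one exchange (often by none, not ``exactly one''). For $y\gg0$, $c\to-\infty$, the dominant basis is the pivot set.

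There is, however, a genuine gap. Everything that carries content is deferred: that for $y\gg0$ exactly the pivots are exchanged out, each exactly once and with distinct partners, and that the $y\gg0$ and $y\ll0$ data glue into a bijection. Moreover, the mechanism you propose for it is false, because the chain of dominant bases is in general not an arc of the Grassmann necklace. Take $A=\begin{pmatrix}1&0&0&q\\0&1&r&0\end{pmatrix}$ with $q<0<r$, i.e.\ $\pi=(1,4)(2,3)$. Its bases are $\{1,2\},\{1,3\},\{2,4\},\{3,4\}$ and its necklace is $(\{1,2\},\{2,4\},\{3,4\},\{2,4\})$. For $\kappa=(-3,-1,\tfrac32,2)$, on the positive $y$-axis the exponents $\sum_{i\in I}\kappa_i^2$ of these four terms are $10$, $11.25$, $5$, $6.25$. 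So $E_{\{1,3\}}$ dominates, and the $y\gg0$ chain is $\{1,2\}\to\{1,3\}\to\{3,4\}$, which passes through a non-necklace basis. If instead $\kappa_1+\kappa_4>\kappa_2+\kappa_3$, the chain is $\{1,2\}\to\{2,4\}\to\{3,4\}$. The regions depend on $\kappa$ while the necklace does not; only the exchanged pairs $[2,3]$, $[1,4]$ are stable. There is also a convention slip: the rule $I_{k+1}=I_k\setminus\{k\}\cup\{\pi(k)\}$ produces $\pi^{-1}$, not the soliton permutation. For $A=(1,a,b)$ it gives $1\mapsto 2$, whereas the $y\gg0$ soliton is $[1,3]$.

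The missing idea is a $\kappa$-independent localisation at each critical direction. At $c=c_{ij}$ with $i<j$, the indices whose weight strictly exceeds that of $i$ and $j$ are, for every $\kappa$:
\begin{itemize}
\item exactly $O=[1,i-1]\cup[j+1,M]$ when $y\gg0$, since these lie farther from $(\kappa_i+\kappa_j)/2$;
\item exactly $O=[i+1,j-1]$ when $y\ll0$.
\end{itemize}
The greedy set built before $i,j$ spans $\mathrm{cl}(O)$, and the continuation afterwards depends only on $\mathrm{cl}(O\cup\{i,j\})$. Hence the $[i,j]$-exchange occurs iff $r(O\cup\{i\})=r(O\cup\{j\})=r(O\cup\{i,j\})=r(O)+1$, where $r$ is the rank of a set of columns; this is a rank condition of Chakravarty--Kodama type. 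All four sets are cyclic intervals, and $r([a,b])=|I_a\cap[a,b]|$ for the necklace $(I_a)$. Write $I_{a+1}=(I_a\setminus\{a\})\cup\{\sigma(a)\}$. Then $r([a,b])-r([a+1,b])=1$ iff $\sigma(a)\notin[a+1,b]$, using that irreducibility rules out loops and coloops. Substituting this, and using monotonicity of rank for the remaining equality, the $y\gg0$ condition becomes $\sigma(j)=i$ and the $y\ll0$ condition becomes $\sigma(i)=j$.

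So $\pi:=\sigma^{-1}$ works. It is a permutation by construction, and ``exactly one soliton per index'' and uniqueness are immediate. Since each index $i$ leaves the necklace at step $i$ and re-enters at step $\pi(i)$, we get $I_1=\{i:\pi(i)>i\}$, which identifies the left ends of the $y\gg0$ solitons with the pivots. This replaces your Step 3 and the unproved half of Step 2. Note that the argument never uses positivity, consistent with the paper later attaching $\pi$ to non-totally-nonnegative $\kappa$-direct sums.
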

In particular, each pair $(i,\pi(i))$ is represented by a chord connecting the boundary labels $i$ and $\pi(i)$, giving a chord diagram. The chord joining $i$ and $\pi(i)$ is drawn above the line if $\pi(i)>i$, and below the line if $\pi(i)<i$.
\begin{example}
Let $\pi=(1,5,3)(2,7,8)(4,6)\in S_8$ be a permutation. The associated chord diagram is shown below.
\begin{figure}[H]
  \begin{minipage}[t]{1\linewidth}
  \centering
  \includegraphics[height=2cm,width=6cm]{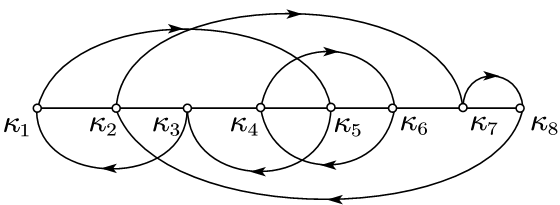}
  \end{minipage}
\end{figure}
\end{example}
\begin{lemma}\cite{C:07}\label{lem:chord}
The number of free parameters in $A\in\text{Gr}(N,M)_{\geq0}$ is read off from the chord diagram and is given by
\begin{align*}
\{\text{\# of pivots}\}+\{\text{\# of crossings}\}+\{\text{\# of cusps in the lower part}\},
\end{align*}
which coincides with the dimension of the positroid cell associated with $\pi(A)$.
\end{lemma}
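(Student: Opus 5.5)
The statement to prove is Lemma~\ref{lem:chord}: the number of free parameters of $A\in\Gr(N,M)_{\geq0}$ equals the number of pivots plus the number of crossings plus the number of cusps in the lower part of the chord diagram of $\pi(A)$, and this equals the dimension of the positroid cell. The plan is to reduce everything to Postnikov's parametrization by $\Le$-diagrams \cite{P:06}. There the dimension of the positroid cell equals the number of boxes of the $\Le$-diagram that are \emph{not} filled with circles, i.e. the number of free positive parameters in the RREF representative. So we must show that this box count equals the chord-diagram expression.

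First we fix the decorated permutation $\pi(A)$ from Theorem~\ref{th:pairing} and recall Postnikov's bijection between $\Le$-diagrams and decorated permutations. This bijection is given by the pipe-dream rule. Pipes enter from the southeast boundary of the Young diagram. They go straight through empty boxes (those carrying parameters) and turn (``elbow'') at boxes with circles. A pipe starting at boundary step $i$ exits at boundary step $\pi(i)$. The Young diagram has rows labeled by pivots and columns by nonpivots, so each box is indexed by a pair (pivot $i$, nonpivot $j$) with $i<j$.

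Second, we classify the boxes locally. Each empty box is a crossing of two pipes. Using the $\Le$-property, we will show that two pipes cross at most once. We will also show that distinct empty boxes correspond bijectively to pairs of pipes whose crossing is forced, i.e. to pairs of chords that either cross in the chord diagram or are nested in a specific configuration involving one upper chord ($\pi(i)>i$) and one lower chord ($\pi(j)<j$).

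Third, we count. The total number of boxes is $\sum_{\text{pivots } i}\#\{\text{nonpivots }j>i\}$. For each pivot, one empty box is accounted for by the ``pivot'' term: the rightmost empty box in each row corresponds to the pivot's own parameter. We check that every row of an irreducible $A$ has at least one empty box. The remaining empty boxes are matched to crossings of chords in the diagram, together with the cusps in the lower part, i.e. the points where a lower chord passes under an upper one in the standard drawing. This matches the Williams--Chakravarty description in \cite{C:07}. An alternative cross-check is the known formula $\dim = k(n-k)-\ell(\pi)$ with alignments subtracted (Postnikov, Thm.~17.8). Comparing it with the count of alignments gives the same expression, since crossings, alignments and cusps partition the pairs of chords.

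The main obstacle is the third step. The precise definition of ``cusp'' must be made to match exactly the pairs of chords that are neither crossings nor alignments in Postnikov's sense. Also, the pivot count must correspond to the chords' endpoints in the correct way, with fixed points excluded by irreducibility. I would first verify the formula on $\Gr(1,2)$ and $\Gr(2,4)$ (Example with $\pi=(1,3)(2,4)$ versus $(1,4)(2,3)$) to pin down the convention, and then argue by induction on the number of boxes, removing the top-left box of the $\Le$-diagram.
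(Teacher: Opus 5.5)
The paper gives no proof of this lemma. It is quoted from \cite{C:07}, with the dimension statement going back to Postnikov \cite{P:06}, so your argument has to stand on its own, and as written it does not.

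\textbf{Inverted pipe-dream rule.} In Postnikov's bijection, and in the convention of Section~\ref{sec4.4}, boxes carrying parameters (empty boxes) are \emph{elbows} and boxes with circles are \emph{crossings}. Already for $A=(1,a)\in\Gr(1,2)_{\ge0}$ the single empty box must be an elbow; otherwise the pipe entering at $1$ exits at $1$ and you get $\pi=\mathrm{id}$ instead of $(1,2)$. So pipe crossings sit at the circles, and their number is the number of boxes minus $\dim$; this is the mechanism behind the alignment formula. Your Step~2, which treats empty boxes as pipe crossings matched with crossing chords, therefore has no footing.

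\textbf{Misread cusp term.} Upper and lower chords lie on opposite sides of the axis and meet only at endpoints. A cusp in the lower part is a label $p$ where two lower chords meet, i.e.\ $\pi^{-1}(p)>p>\pi(p)$. For $\pi=(1,3,6,4)(2,7,5)$ in Example~\ref{ex:l5} this gives $3$ pivots and $4$ crossings ($[1,3]$ with $[2,7]$ above; $[1,4]$ with $[2,5]$, $[2,5]$ with $[4,6]$, $[4,6]$ with $[5,7]$ below). It also gives lower cusps at $4$ and $5$, for the total $9$ stated there.

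\textbf{False partition in Step~3 (the decisive gap).} Crossings, alignments and cusps do \emph{not} partition the pairs of chords. For $\pi=(1,2)(3,4)$ the disjoint upper chords $[1,2]$ and $[3,4]$ are none of these. Granting Postnikov's formula $\dim=N(M-N)-A(\pi)$, the alignments in the linear picture are the nested same-side pairs and the disjoint upper/lower pairs. Writing $\mathrm{cr}$ and $\mathrm{lc}$ for the numbers of crossings and lower cusps, what must be proved is
\[
A(\pi)+\mathrm{cr}(\pi)+\mathrm{lc}(\pi)=N(M-N)-N .
\]
This identity is the actual content of the lemma, and it needs an argument. One that works:
\begin{itemize}
\item An upper and a lower chord never touch in a single point. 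So the $N(M-N)$ mixed pairs are either disjoint (alignments) or overlap in a segment. Hence $N(M-N)-A(\pi)$ equals the number of overlapping mixed pairs minus the number of nested same-side pairs.
\item Assign each pair overlapping in a segment to the left endpoint $p$ of the overlap.
\item Use that each gap between consecutive labels is spanned by equally many upper and lower chords. A local count then shows that, at $p$, overlapping mixed pairs exceed overlapping same-side pairs (crossing or nested) by $1$ if some chord starts at $p$, and by $0$ if $p$ is a peak.
\item The labels where a chord starts are the $N$ excedances together with the lower cusps. This yields $N(M-N)-A(\pi)=N+\mathrm{cr}(\pi)+\mathrm{lc}(\pi)$.
\end{itemize}
Without such an argument, or the citation, the proof is incomplete.

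Finally, deleting the top-left box, as in your proposed induction, does not leave a Young diagram.
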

In the next section, we use vertex operators to construct regular KP soliton solutions and study combinations of such regular solutions through products of vertex operators.

\section{Vertex operator construction and $\kappa$-direct sum}\label{sec:3}
We review the vertex operator construction \cite{DKJM:83, Na:23}, which provides a systematic way to generate and combine soliton solutions.

Let $\kappa=(\kappa_{1}<\ldots<\kappa_{M})$ be an ordered set of real parameters. The vertex operators are defined by
\begin{equation*}
X(\kappa_i, \kappa_j)=\exp(\xi_{j}-\xi_{i})\exp\left(-\left(\tilde{\partial}(\kappa_{j})-\tilde{\partial}(\kappa_{i})\right)\right),
\end{equation*}
where
\begin{align*}
\xi_{k}=\kappa_{k}x+\kappa_{k}^{2}y+\kappa_{k}^{3}t\quad\quad\text{and}\quad\quad \tilde{\partial}(\kappa_{k})=\kappa_{k}^{-1}\partial_{x}+\frac{1}{2}\kappa_{k}^{-2}\partial_{y}+\frac{1}{3}\kappa_{k}^{-3}\partial_{t}.
\end{align*}

\begin{lemma}\label{lem:X}
The vertex operators satisfy
\begin{equation*}
X(\kappa_{i},\kappa_{j})X(\kappa_{k},\kappa_{l})=C_{[(i,j),(k,l)]}: X(\kappa_{i},\kappa_{j})X(\kappa_{k},\kappa_{l}):,
\end{equation*}
where $C_{[(i,j),(k,l)]}$ is the cross-ratio factor given by
\begin{equation*}
C_{[(i,j),(k,l)]}=\frac{(\kappa_{i}-\kappa_{k})(\kappa_{j}-\kappa_{l})}{(\kappa_{i}-\kappa_{l})(\kappa_{j}-\kappa_{k})},
\end{equation*}
whose sign depends only on the relative ordering of the spectral parameters. Here $:\cdot:$ denotes normal ordering. In particular,
\begin{align*}
&X(\kappa_{i},\kappa_{j})X(\kappa_{k},\kappa_{l})=X(\kappa_{k},\kappa_{l})X(\kappa_{i},\kappa_{j}), \qquad\text{if}\quad (\kappa_i-\kappa_l)(\kappa_j-\kappa_k)\ne 0,\\
&X(\kappa_{i},\kappa_{j})X(\kappa_{k},\kappa_{l})=0,\qquad \text{if $\kappa_{i}=\kappa_{k}$ or $\kappa_{j}=\kappa_{l}$}.
\end{align*}
\end{lemma}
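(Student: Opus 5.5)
The plan is to compute both sides directly, using the standard vertex operator calculus. Write $X(\kappa_i,\kappa_j)=e^{\xi_j-\xi_i}e^{-D_{ij}}$ with $D_{ij}=\tilde\partial(\kappa_j)-\tilde\partial(\kappa_i)$. To form the product $X(\kappa_i,\kappa_j)X(\kappa_k,\kappa_l)$, the differential part $e^{-D_{ij}}$ must be moved past the multiplication operator $e^{\xi_l-\xi_k}$. Since $\xi_l-\xi_k$ is linear in $(x,y,t)$, the operator $e^{-D_{ij}}$ acts on $e^{\xi_l-\xi_k}$ as a shift. It produces the scalar $\exp\bigl(-D_{ij}(\xi_l-\xi_k)\bigr)$ times $e^{\xi_l-\xi_k}e^{-D_{ij}}$. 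After this exchange all exponential factors stand to the left and all shift operators to the right, which is precisely the normal ordered product $:X(\kappa_i,\kappa_j)X(\kappa_k,\kappa_l):$. This product is symmetric in the two factors because exponentials and shifts each commute among themselves.

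The key step is evaluating the scalar. We have $\tilde\partial(\kappa_a)\xi_b=\kappa_b/\kappa_a+\tfrac12(\kappa_b/\kappa_a)^2+\tfrac13(\kappa_b/\kappa_a)^3$, the first three terms of $-\ln(1-\kappa_b/\kappa_a)$. Here I would follow the usual convention, as in \cite{DKJM:83}, where the full KP hierarchy times are used, or equivalently the truncation is understood modulo higher flows. Then $\tilde\partial(\kappa_a)\xi_b=-\ln(1-\kappa_b/\kappa_a)$, interpreted via analytic continuation / formal series. The exponent becomes
\[
-D_{ij}(\xi_l-\xi_k)=\ln\frac{(1-\kappa_l/\kappa_j)(1-\kappa_k/\kappa_i)}{(1-\kappa_k/\kappa_j)(1-\kappa_l/\kappa_i)}.
\]
The $\kappa$-denominators cancel pairwise, leaving
\[
\frac{(\kappa_j-\kappa_l)(\kappa_i-\kappa_k)}{(\kappa_j-\kappa_k)(\kappa_i-\kappa_l)}=C_{[(i,j),(k,l)]}.
\]
This identifies the constant. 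Because $\kappa_1<\dots<\kappa_M$, the sign of each factor depends only on the relative order of the indices.

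For the two special cases I would argue as follows.

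\textbf{Commutation.} Performing the same computation for $X(\kappa_k,\kappa_l)X(\kappa_i,\kappa_j)$ gives $C_{[(k,l),(i,j)]}$ times the same normal ordered product. Swapping $(i,j)\leftrightarrow(k,l)$ leaves the cross-ratio unchanged: both numerator factors flip sign, and the denominator factors are exchanged with sign changes that cancel. The two orderings therefore agree whenever the factor is finite, i.e.\ when $(\kappa_i-\kappa_l)(\kappa_j-\kappa_k)\neq0$.

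\textbf{Vanishing.} If $\kappa_i=\kappa_k$ or $\kappa_j=\kappa_l$, a numerator factor vanishes. Provided the denominator is nonzero, $C=0$, so the product is zero.

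The main obstacle is justifying the logarithm identity rigorously when only three times $(x,y,t)$ are present. With the truncated $\tilde\partial$, the scalar is only the cubic Taylor truncation of the logarithm. I would therefore state the lemma in the KP hierarchy with infinitely many times $t_n$ and $\tilde\partial(\kappa)=\sum_n \kappa^{-n}\partial_{t_n}/n$, restricting to $(x,y,t)$ only after acting on $\tau$. Convergence of the series requires $|\kappa_b/\kappa_a|<1$. Where this fails, I would treat the identity formally and extend it by analytic continuation as a rational identity in the $\kappa$'s.
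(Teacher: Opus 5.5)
Your argument is correct. It is the standard Date--Jimbo--Kashiwara--Miwa computation: move the shift $e^{-D_{ij}}$ past $e^{\xi_l-\xi_k}$, sum the logarithmic series, and continue analytically. The paper states this lemma without proof and only cites those sources, so this is the argument it relies on.

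Your caveats are real corrections to the statement as printed, not gaps in your proof:
\begin{enumerate}
\item With only $(x,y,t)$, the scalar is the exponential of the cubic Taylor truncations of the logarithms. That is not the cross-ratio, and it is nonzero even when $\kappa_i=\kappa_k$. So the infinite-time hierarchy of your last paragraph is genuinely needed; ``modulo higher flows'' is not an equivalent reading.
\item The vanishing claim also requires $(\kappa_i-\kappa_l)(\kappa_j-\kappa_k)\neq0$. For example, $\kappa_i=\kappa_k=\kappa_l$ gives $X(\kappa_i,\kappa_i)=1$, so the product is $X(\kappa_i,\kappa_j)\neq0$.
\end{enumerate}
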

\subsection{The vertex operators for totally nonnegative matrices}
Let $A\in\text{Gr}(n,m)_{\geq0}$ with $1\leq n\leq m-1$ be a totally nonnegative matrix. We denote the pivot set by $\mathcal{I}_{P}[A]=\{i_{1}<\ldots<i_{n}\}$. For each row $k$, let $a_{k,j}$ denote the nonzero non-pivot entries of $A$. Following \cite{HY:24}, the vertex operator associated with $A$ is defined as the product of $X(\kappa_{i_k},\kappa_j)$ determined by these nonzero entries:
\begin{equation}\label{eq:VO}
V[A]:=\exp\left(\sum_{k=1}^n\sum_{j}b_{k,j}X(\kappa_{i_k},\kappa_{j})\right)=\prod_{k=1}^{n}\prod_{j}\left(1+b_{k,j}X(\kappa_{i_k},\kappa_{j})\right),
\end{equation}
where both the sum and the product run over all indices $j$ for which $a_{k,j}\neq 0$, and the coefficient $b_{k,j}$ is given by
\begin{align*}
b_{k,j}=a_{k,j}\prod_{p\ne k}\frac{\kappa_{i_p}-\kappa_{j}}{\kappa_{i_p}-\kappa_{i_k}}.
\end{align*}
Note that in $X(\kappa_{i_k},\kappa_{j})$, the index $i_{k}$ lies in $\mathcal{I}_{P}[A]$ while $j\notin\mathcal{I}_{P}[A]$, so the spectral parameters are distinct. Hence these operators commute by Lemma \ref{lem:X}.
\begin{remark}\cite{HY:24}
For a totally nonnegative matrix $A\in\text{Gr}(n,m)_{\geq0}$, all coefficients $b_{k,j}$ are strictly positive. Thus the signs of the nonzero entries $a_{k,j}$ are determined by the positivity condition $b_{k,j}>0$.
\end{remark}

With this choice of \(b_{k,j}\), the normalized form of the $\tau$-function \eqref{eq:tauPl} can be written as
\begin{align*}
\frac{\tau(x,y,t)}{E_{\mathcal{I}_{P}[A]}(x,y,t)}=V[A]\cdot 1,
\end{align*}
where $E_{\mathcal{I}_{P}[A]}$ is the exponential term corresponding to the pivot set. Since multiplication by a positive factor does not affect the solution $u(x,y,t)=2(\ln\tau(x,y,t))_{xx}$, we may equivalently write
\begin{equation}\label{eq:nortau}
\tau_{A}(x,y,t)=V[A]\cdot 1.
\end{equation}
(See \cite{HY:24} for details). Note that the $\tau$-function $\tau_A(x,y,t)$ is of Grammian form and can be identified with the $M$-theta function introduced in \cite{K:24}. In its expansion, terms corresponding to the same exponential phase combine together, reflecting the Pl\"ucker structure of $A$.

\subsection{Products of vertex operators $V[A_{k}]$ and $\kappa$-direct sums}
We now consider products of the vertex operators \(V[A_k]\) associated with different totally nonnegative matrices $A_{k}$. As discussed in \cite{HY:24}, cross-ratio factors arising between the operators $X(\kappa_{i_k},\kappa_{j})$ appearing in different \(V[A_k]\) may lead to singular soliton solutions.

Let $A_k\in\text{Gr}(n_k,m_k)_{\ge0}$, $k=1,\ldots,K$, be totally nonnegative matrices associated with vertex operators as in \eqref{eq:VO}, which we regard as building blocks of a composite solution. We set $N=n_{1}+\ldots+n_{K}$ and $M=m_{1}+\ldots+m_{K}$. The $\tau$-function generated by their product is defined by
\begin{equation}\label{eq:tau-VO}
\tau_{A_{1},\ldots,A_{K}}(x,y,t)=V[A_1]\cdot V[A_2]\cdots V[A_K]\cdot 1.
\end{equation}

For each block $A_k$, let $\mathcal{I}[A_k]$ be a part of a \emph{partition} of $[M]$ with $|\mathcal{I}[A_k]| = m_k$, and assign spectral parameters $\{\kappa_{i}:~i\in\mathcal{I}[A_{k}]\}$ to $A_{k}$. Reordering all parameters in increasing order gives
\begin{align*}
\kappa:=\text{ord}\left(\bigsqcup_{k=1}^K\{\kappa_{i}:~i\in\mathcal{I}[A_{k}]\}\right)=(\kappa_1<\kappa_2<\cdots<\kappa_M).
\end{align*}
This induces an \emph{order-preserving bijection}
\begin{align*}
\phi:~\bigsqcup_{k=1}^{K}\mathcal{I}[A_{k}]\rightarrow [M],
\end{align*}
so that the reordered index set $\hat{\mathcal{I}}[A_{k}]:=\phi(\mathcal{I}[A_{k}])$ respects the increasing order of spectral parameters $\kappa_1<\cdots<\kappa_M$.

We then form an $N\times M$ matrix by assembling the blocks $A_{k}$ with $k=1,\ldots,K$ according to the reordered index sets $\hat{\mathcal{I}}[A_{k}]$. This construction is called the \emph{$\kappa$-direct sum}.
\begin{definition}\label{def:SumA}
The $\kappa$-direct sum of the matrices $A_{k}$ is the block matrix
\[
\widehat{\bigoplus}_{k=1}^KA_{k}\in\text{Gr}(N,M)
\]
obtained by placing each block $A_{k}$ into the columns indexed by $\hat{\mathcal{I}}[A_{k}]$ and arranging the rows so that the resulting matrix is in RREF. Notice that
in general it is not totally nonnegative. This matrix is associated with the $\tau$-function defined in \eqref{eq:tau-VO}.
\end{definition}

Each totally nonnegative block $A_k$ has an associated permutation $\pi(A_k)$ describing its asymptotic soliton structure (see Theorem~\ref{th:pairing}). Under the bijection $\phi$, we regard $\pi(A_k)$ as a permutation on $\hat{\mathcal{I}}[A_k]$. In cycle notation, we write
\[
\pi(A_k)=\prod_{p=1}^{P_k}(j^{(p)}_1,\ldots,j^{(p)}_{k_p}),
\]
where $P_k$ is the number of cycles, and each $(j^{(p)}_1,\ldots,j^{(p)}_{k_p})$ is a $k_p$-cycle with entries in $\hat{\mathcal{I}}[A_{k}]$. As shown in \cite[Corollary 5.8]{HY:24}, the $\kappa$-direct sum preserves the dominant exponential structure associated with each block, regardless of whether it is totally nonnegative. Thus the permutation associated with the $\kappa$-direct sum is given by
\begin{equation}\label{eq:kapp}
\pi(\widehat{\bigoplus}_{k=1}^KA_k)=\prod_{k=1}^{K}\pi(A_{k}).
\end{equation}
Since the index sets are disjoint, the dimension contributions from different blocks are independent, and hence the positroid cell dimensions are additive. However, the dimension of each totally nonnegative block can be read off from its associated chord diagram (see Lemma \ref{lem:chord}). Each crossing between chords from different blocks introduces an additional dimension, so the corresponding $\kappa$-direct sum cannot remain totally nonnegative.

\begin{remark}\label{re:c}
Although each block \(A_k\) is totally nonnegative, the cross-ratio factors arising between the operators $X(\kappa_{i_k},\kappa_{j})$ appearing in different \(V[A_k]\) are not controlled by the total nonnegativity of the individual blocks. As a result, the Pl\"ucker coordinates of the $\kappa$-direct sum may change sign, leading to singular soliton solutions.
\end{remark}

\begin{example}\label{ex:l5}
We illustrate the $\kappa$-direct sum construction and compare the roles of cross-ratio factors within a single vertex operator and between different vertex operators. Consider two vertex operators $V[A_{1}]$ and $V[A_{2}]$ with $|\mathcal{I}[A_{1}]|=4$ and $|\mathcal{I}[A_{2}]|=3$, whose reordered index sets are
\[
\hat{\mathcal{I}}[A_1]=\{1,3,4,6\},\qquad \hat{\mathcal{I}}[A_2]=\{2,5,7\}.
\]
As a concrete example, consider the permutations in cycle notation,
\begin{align*}
\pi(A_1)=(1,3,6,4),\qquad \pi(A_2)=(2,7,5).
\end{align*}
The associated matrices $A_{1}\in\text{Gr}(2,4)_{\geq0}$ and $A_{2}\in\text{Gr}(1,3)_{\geq0}$ are given by
\begin{align*}
A_{1}=\left(
\begin{array}{cccc}
1 & 0 & a_{1,4} & a_{1,6}\\
0 & 1 &  a_{3,4} & a_{3,6}\\
\end{array}
\right),\qquad
A_{2}=\begin{pmatrix}
1 & a_{2,5} &a_{2,7}
\end{pmatrix},
\end{align*}
where we impose the relation $a_{1,4}a_{3,6}-a_{3,4}a_{1,6}=0$. From these nonzero entries we construct the associated vertex operators
\begin{align*}
V[A_1]&=\exp\left(b_{1,4}X(\kappa_1,\kappa_4)+b_{1,6}X(\kappa_1,\kappa_6)+
b_{3,4}X(\kappa_3,\kappa_4)+b_{3,6}X(\kappa_3,\kappa_6)\right),\\
V[A_2]&=\exp\left(b_{2,5}X(\kappa_2,\kappa_5)+b_{2,7}X(\kappa_2,\kappa_7)\right),
\end{align*}
where
\begin{align*}
b_{1,4}=a_{1,4}\frac{\kappa_3-\kappa_4}{\kappa_3-\kappa_1},\quad
b_{1,6}=a_{1,6}\frac{\kappa_3-\kappa_6}{\kappa_3-\kappa_1},\quad
b_{3,4}=a_{3,4}\frac{\kappa_1-\kappa_4}{\kappa_1-\kappa_3},\quad
b_{3,6}=a_{3,6}\frac{\kappa_1-\kappa_6}{\kappa_1-\kappa_3},\quad
\end{align*}
and $b_{2,5}=a_{2,5}, b_{2,7}=a_{2,7}$. Since $A_{1}$ and $A_{2}$ are totally nonnegative, all coefficients $b_{k,j}>0$. This implies that $a_{1,4}, a_{1,6}<0$, while the remaining entries are positive. From these vertex operators, we have the $\tau$-functions given by $\tau_{A_{k}}(x,y,t)=V[A_k]\cdot 1$, i.e.
\begin{align*}
\tau_{A_{1}}(x,y,t)=&
1+b_{1,4}e^{\phi_{1,4}}+b_{1,6}e^{\phi_{1,6}}+b_{3,4}e^{\phi_{3,4}}+b_{3,6}e^{\phi_{3,6}},\\
\tau_{A_{2}}(x,y,t)=&1+b_{2,5}e^{\phi_{2,5}}+b_{2,7}e^{\phi_{2,7}},
\nonumber
\end{align*}
where $\phi_{i,j}=\xi_j(x,y,t)-\xi_i(x,y,t)$. For $\tau_{A_{1}}(x,y,t)$, the terms in the expansion of \(V[A_1]\) corresponding to the same exponential phase $\phi_{1,4}+\phi_{3,6}=\phi_{1,6}+\phi_{3,4}$ combine as
\begin{equation*}
b_{1,4}b_{3,6} X(\kappa_1,\kappa_4) X(\kappa_3,\kappa_6) + b_{1,6}b_{3,4} X(\kappa_1,\kappa_6) X(\kappa_3,\kappa_4)
= (a_{1,4}a_{3,6}-a_{3,4}a_{1,6}) \frac{\kappa_4-\kappa_6}{\kappa_3-\kappa_1} e^{\phi_{1,4}+\phi_{3,6}},
\end{equation*}
which vanishes by construction, reflecting the vanishing of the Pl\"ucker coordinate $\Delta_{46}(A_{1})$.

We now consider the composite $\tau$-function generated by the product of vertex operators
\begin{equation}\label{eq:A1A2}
\tau_{A_{1},A_{2}}(x,y,t)=V[A_1]\cdot V[A_2]\cdot 1.
\end{equation}
Since all the operators $X(\kappa_i,\kappa_j)$ commute, we may write
\begin{align*}
V[A_1]\cdot V[A_2]=\exp\left(b_{1,4}X(\kappa_1,\kappa_4)+b_{1,6}X(\kappa_1,\kappa_6)+
b_{3,4}X(\kappa_3,\kappa_4)\right.\\
\left. +b_{3,6}X(\kappa_3,\kappa_6)+b_{2,5}X(\kappa_2,\kappa_5)+b_{2,7}X(\kappa_2,\kappa_7)\right).\nonumber
\end{align*}
In this product, the cross-ratio factor associated with the pair of operators \(X(\kappa_{1},\kappa_{4})\in V[A_1]\) and \(X(\kappa_{2},\kappa_{5})\in V[A_2]\) is
\begin{align*}
C_{[(1,4),(2,5)]}=\frac{(\kappa_{1}-\kappa_{2})(\kappa_{4}-\kappa_{5})}{(\kappa_{1}-\kappa_{5})(\kappa_{4}-\kappa_{2})}<0.
\end{align*}
Since no other pair of operators corresponds to the same exponential phase, this negative cross-ratio factor produces a negative coefficient in the composite $\tau$-function \eqref{eq:A1A2}, resulting in a singular soliton solution. Using the reordered index sets $\hat{\mathcal{I}}[A_1]$ and $\hat{\mathcal{I}}[A_2]$, the $\kappa$-direct sum is given by
\[
A_{1}\widehat{\bigoplus}A_{2}=\begin{pmatrix}
1& & &a_{1,4} & &a_{1,6} &\\
 & 1& & &a_{2,5} & &a_{2,7}\\
  & &1 & a_{3,4}& &a_{3,6} &\\
\end{pmatrix} \in \Gr(3,7),
\]
where the blank entries are zero. The associated permutation is $\pi(A_{1}\widehat{\bigoplus}A_{2})=(1,3,6,4)(2,7,5)$. The corresponding chord diagram is shown below.
\begin{figure}[H]
  \begin{minipage}[t]{1\linewidth}
  \centering
  \includegraphics[height=1.8cm,width=4.5cm]{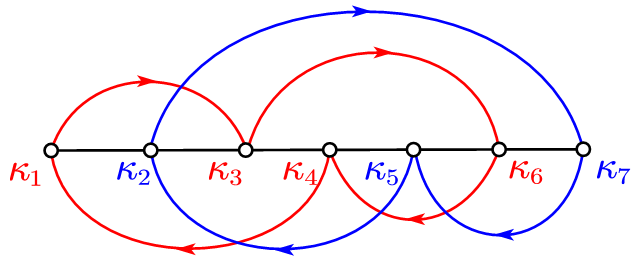}
  \end{minipage}
\end{figure}
\noindent
This loss of total nonnegativity is reflected analytically by negative cross-ratio factors, and combinatorially by crossings between chords from different blocks. In this example, a totally nonnegative realization of this permutation has dimension nine (see Lemma~\ref{lem:chord}), whereas the $\kappa$-direct sum has only five free parameters. Thus the $\kappa$-direct sum cannot remain totally nonnegative.
\end{example}

\begin{remark}
In general, the $\tau$-function generated by the product \(V[A_1]\cdots V[A_K]\) is different from that generated by the single vertex operator associated with the $\kappa$-direct sum. However, the corresponding solutions have the same permutation. To illustrate this difference, let $V[A_{1},A_{2}]$ denote the vertex operator associated with the $\kappa$-direct sum in Example \ref{ex:l5}.
\begin{align*}
V[A_{1},A_{2}]=&\exp\left(c_{1,4}X(\kappa_1,\kappa_4)+c_{1,6}X(\kappa_1,\kappa_6)+c_{2,5}X(\kappa_2,\kappa_5)+\right.\\ \nonumber
&\left.+c_{2,7}X(\kappa_2,\kappa_7)+c_{3,4}X(\kappa_3,\kappa_4)+c_{3,6}X(\kappa_3,\kappa_6)\right).
\end{align*}
Note that $V[A_1,A_2]$ is not equal to $V[A_1]\cdot V[A_2]$, since the coefficients $c_{i,j}$ differ from the $b_{i,j}$. For instance,
\begin{align*}
c_{1,4}=a_{1,4}\frac{(\kappa_2-\kappa_4)(\kappa_3-\kappa_4)}{(\kappa_2-\kappa_1)(\kappa_3-\kappa_1)},\quad\quad c_{2,5}=a_{2,5}\frac{(\kappa_1-\kappa_5)(\kappa_3-\kappa_5)}{(\kappa_1-\kappa_2)(\kappa_3-\kappa_2)}.
\end{align*}
In this case, one must take $a_{1,4}>0$ and $a_{2,5}<0$ to ensure $c_{1,4},c_{2,5}>0$. However, even though all coefficients $c_{i,j}$ are positive, negative terms may still appear in the expansion of $V[A_1,A_2]$ through cross-ratio factors, for example those associated with $X(\kappa_1,\kappa_4)$ and $X(\kappa_2,\kappa_5)$. Therefore the
resulting $\kappa$-direct sum is not totally nonnegative, and the associated solution is also singular.
\end{remark}

We now recall the notion of \emph{crossing} for permutations.
\begin{definition}\label{def:NC}\cite{HY:24}
Two permutations are called \emph{crossing} if their associated chord diagrams contain intersecting chords. Two blocks $A_{i}$ and $A_{j}$ are called \emph{crossing} if the corresponding permutations $\pi(A_{i})$ and $\pi(A_{j})$ are crossing.
\end{definition}
Note that the blocks $A_{1}$ and $A_{2}$ in Example \ref{ex:l5} are crossing. In general, crossings between different blocks give rise to negative Pl\"ucker coordinates of
the $\kappa$-direct sum, leading to singular soliton solutions. This gives the following characterization of regularity.
\begin{proposition}\label{pro:V1K}\cite{HY:24}
Suppose that $A_{k}\in\Gr(n_k,m_k)_{\ge 0}$ for $k=1,\ldots,K$, and let $V[A_k]$ be the vertex operators associated with these blocks. The $\tau$-function \eqref{eq:tau-VO} generated by these vertex operators gives a regular soliton solution \emph{if and only if} the blocks $A_{k}$ are mutually non-crossing.
\end{proposition}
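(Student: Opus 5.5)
The plan is to expand the composite $\tau$-function \eqref{eq:tau-VO} in exponentials and track the sign of each coefficient through the cross-ratio factors of Lemma~\ref{lem:X}. By \eqref{eq:VO}, every $X(\kappa_i,\kappa_j)$ occurring in any $V[A_k]$ commutes with the others. Hence
\[
V[A_1]\cdots V[A_K]\cdot 1=\sum_{S_1,\ldots,S_K}\Bigl(\prod_{k}\prod_{(i,j)\in S_k}b_{i,j}\Bigr)\,X_{S_1}\cdots X_{S_K}\cdot 1,
\]
where $S_k$ runs over sets of operators from the $k$-th block and $X_{S_k}$ denotes the product of the operators in $S_k$. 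Applying Lemma~\ref{lem:X} repeatedly, the product factors as
\[
X_{S_1}\cdots X_{S_K}\cdot 1=\Bigl(\prod_k c(S_k)\Bigr)\Bigl(\prod_{k<k'}\ \prod_{(i,j)\in S_k,\,(p,q)\in S_{k'}}C_{[(i,j),(p,q)]}\Bigr)e^{\sum\phi},
\]
where $c(S_k)$ is the intra-block cross-ratio factor. The intra-block parts, summed over all $S_k$ with the same phase, reproduce $\tau_{A_k}=V[A_k]\cdot 1$. That function has nonnegative coefficients because $A_k\in\Gr(n_k,m_k)_{\ge0}$ and the $b_{k,j}>0$ (see the Remark following \eqref{eq:VO}). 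The whole question therefore reduces to the signs of the inter-block factors $C_{[(i,j),(p,q)]}$.

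\emph{Non-crossing implies regular.} The sign of $C_{[(i,j),(p,q)]}$ with $i<p$ is negative exactly when $i<p<j<q$; this is the three-case analysis from the introduction. I would first show that if $\pi(A_k)$ and $\pi(A_{k'})$ are non-crossing, then every operator $X(\kappa_i,\kappa_j)$ with $a_{i,j}\neq0$ in block $k$ spans an interval $[i,j]$ that is nested in or disjoint from every such interval of block $k'$. The key fact is that the indices $i,j$ of a nonzero entry lie in the span of a single connected component of the chord diagram of $\pi(A_k)$. Non-crossing chord diagrams force the spans of the components of different blocks to be nested or disjoint, with nesting occurring inside a gap between consecutive boundary points. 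With this in hand, every inter-block factor is positive, so each coefficient of $\tau$ is a positive multiple of a product of nonnegative block coefficients. The constant term is $1>0$, so $\tau>0$ everywhere and the solution is regular.

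\emph{Crossing implies singular.} Suppose a chord $[i,j]$ of $\pi(A_k)$ crosses a chord $[p,q]$ of $\pi(A_{k'})$ with $i<p<j<q$. By \cite[Corollary 5.8]{HY:24}, quoted before \eqref{eq:kapp}, both line-solitons $[i,j]$ and $[p,q]$ survive asymptotically in the composite solution. I would choose a region of the $xy$-plane at fixed $t$ where these two line-solitons meet. There the dominant terms of $\tau$, after factoring out a common positive exponential, reduce to four terms of the form
\[
1+\beta e^{\phi_{i,j}}+\beta' e^{\phi_{p,q}}+\beta\beta' C_{[(i,j),(p,q)]}e^{\phi_{i,j}+\phi_{p,q}},
\]
with $\beta,\beta'>0$. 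This is exactly the situation of the crossing case (b) in the introduction. Since $C_{[(i,j),(p,q)]}<0$, this restricted $\tau$ changes sign. By the criterion in the Example of Section~\ref{sec:2}, a line-soliton whose adjacent dominant exponentials have opposite signs is singular. Example~\ref{ex:l5} is a model instance.

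The main obstacle is this converse direction. One must justify that in the chosen region no other terms share the phase $\phi_{i,j}+\phi_{p,q}$, or at least that their combined coefficient remains negative, and that all other exponentials are subdominant there. I would attempt this by working with the dominant exponentials for the block permutations from Theorem~\ref{th:pairing}. The idea is to pick $(x,y)$ in the sector where, within block $k$ and block $k'$ separately, only the phases for the soliton $[i,j]$ and for $[p,q]$ respectively compete. Because the blocks use disjoint index sets, the dominant phases then multiply, and the only surviving cross term is the single one written above.
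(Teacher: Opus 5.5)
Note first that the paper does not prove this proposition; it quotes it from [HY:24]. I am therefore judging your argument on its own merits.

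\medskip

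Your forward direction is right in outline, but it skips two steps.

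First, regrouping each block's terms into the coefficients of $\tau_{A_k}$ is legitimate only because $\prod_{(i,j)\in S_k,(p,q)\in S_{k'}}C_{[(i,j),(p,q)]}$ splits into factors over (pivot, pivot), (nonpivot, nonpivot) and mixed pairs. It therefore depends only on the phases of $S_k$ and $S_{k'}$. This constancy is essential, because individual intra-block terms can be negative: in Example~\ref{ex:l5}, $C_{[(1,4),(3,6)]}<0$.

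Second, your ``key fact'' must say that $i$ and $j$ are \emph{elements} of one component; lying in its span is not enough. It also needs proof. A sketch:
\begin{itemize}
\item $a_{ij}\neq0$ places $i$ and $j$ in the same connected component of the positroid of $A_k$.
\item A connected component $S$ has a stabilized-interval-free permutation. Hence every cut strictly inside $[\min S,\max S]$ is covered by both an upper and a lower chord of $S$.
\item If the spans of components from different blocks partially overlap, two upper chords must cross.
\item If a component $S'$ lies inside the span of $S$ but meets two gaps of $S$, then $S\cap(\min S',\max S')$ is a $\pi$-stable proper interval, which is a contradiction.
\end{itemize}

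\medskip

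The genuine gap is the converse, and it is more than the dominance issue you flag.

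First, the local form $1+\beta e^{\phi_{i,j}}+\beta'e^{\phi_{p,q}}+\beta\beta'Ce^{\phi_{i,j}+\phi_{p,q}}$ with $\beta,\beta'>0$ assumes that a chord of $\pi(A_k)$ is an operator of $V[A_k]$ balancing against the vacuum. This is false in general. For $A=(1,a,b)$ we have $\pi=(1,3,2)$. Its lower chord $[2,3]$ joins two nonpivots, there is no $X(\kappa_2,\kappa_3)$ in $V[A]$, and the $[2,3]$-soliton separates the phases $\phi_{1,2}$ and $\phi_{1,3}$. In addition, the inter-block factors attached to the base phases need not preserve the sign of your $\beta$.

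Second, nothing guarantees a point $(x,y,t)$ where the $[i,j]$-ray of one block meets the $[p,q]$-ray of the other with exactly four dominant phases. Theorem~\ref{th:pairing} only describes $|y|\gg0$ at fixed $t$, and two asymptotic rays with different slopes need not meet.

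You can avoid the geometry entirely.
\begin{itemize}
\item Let $\Phi^0_k,\Phi^1_k$, with $\Phi^1_k-\Phi^0_k=\xi_j-\xi_i$, be the two dominant phases of block $k$ along its $[i,j]$-soliton. Define $\Phi^0_{k'},\Phi^1_{k'}$ similarly for $[p,q]$, and put every other block at its vacuum phase.
\item Because the block index sets are disjoint, each total phase determines each block's phase. So the four coefficients $\gamma^{ab}$ of $e^{\Phi^a_k+\Phi^b_{k'}}$ in $\tau$ are genuine, nonzero coefficients of $\tau$.
\item Write the inter-block factor as $\prod_{a,b}(\kappa_a-\kappa_b)^{\sigma(a)\sigma'(b)}$, where $\sigma=+1$ on pivots used and $-1$ on nonpivots used. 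The block coefficients then cancel and you get
\[
\frac{\gamma^{00}\gamma^{11}}{\gamma^{10}\gamma^{01}}=C_{[(i,j),(p,q)]}<0
\]
for crossing chords. Hence at least one $\gamma^{ab}$ is negative.
\item The composite $\tau$ is, up to a positive factor, the $\tau$-function of a point of $\Gr(N,M)$, namely the $\kappa$-direct sum with rescaled entries. So its coefficients are positive multiples of Pl\"ucker coordinates.
\item The criterion recalled in Section~\ref{sec:2} (regular if and only if totally nonnegative) now gives singularity.
\end{itemize}
This is exactly the mechanism used in Example~\ref{ex:l5}.
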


In the next section, we use the Deodhar decomposition to give a combinatorial description of the $\kappa$-direct sum via Go-diagrams.
\section{The Deodhar decomposition of the Grassmannian and Go-diagrams}\label{sec:4}
In this section, we review the Deodhar decomposition of the Grassmannian and its realization via Go-diagrams, which will be used to study the combinatorial structure of the $\kappa$-direct sum.

For a matrix $A\in\Gr(N,M)$ in RREF, let $I=\{i_1,\ldots,i_N\}$ denote its pivot set, which determines a Young diagram inside an $N\times(M-N)$ rectangle. Each box is labeled by a pair $(i,j)$, where $i$ is a pivot index and $j$ is a non-pivot index. This labeling is illustrated in the following diagram.

\bigskip

\setlength{\unitlength}{0.51mm}
\begin{center}
  \begin{picture}(120,60)
\put(5,55){\line(1,0){142}}
\put(5,45){\line(1,0){110}}
  \put(5,35){\line(1,0){90}}
  \put(5,25){\line(1,0){90}}
  \put(5,15){\line(1,0){70}}
  \put(5,5){\line(1,0){70}}
  \put(5,5){\line(0,1){50}}
 \put(31,5){\line(0,1){50}}
  \put(18,5){\line(0,1){50}}
  \put(75,5){\line(0,1){10}}
  \put(62,5){\line(0,1){10}}
  \put(95,25){\line(0,1){10}}
  \put(82,25){\line(0,1){10}}
\put(115,45){\line(0,1){10}}
\put(102,45){\line(0,1){10}}
  \put(98,29){${i_{n}}$}
  \put(-35,29){$M-N+n$}
  \put(-35,49){$M-N+1$}
  \put(-13,38){$\vdots$}
  \put(-13,18){$\vdots$}
  \put(65,38){$\vdots$}
  \put(10,38){$\vdots$}
  \put(24,38){$\vdots$}
  \put(100,38){${}$}
  \put(60,49){$\cdots$}
  \put(60,58){$\cdots$}
  \put(55,29){$\cdots$}
  \put(47,18){$\vdots$}
  \put(10,18){$\vdots$}
  \put(24,18){$\vdots$}
  \put(80,18){${}$}
 \put(-15,9){$M$}
  \put(45,9){$\cdots$}
  \put(78,9){${i_N}$}
  \put(0,58){$M-N$}
 \put(105,58){${i_1}{\qquad\cdots\qquad 1}$}
  \put(118,49){${i_1}{\quad\cdots\quad 1}$}

    \put(7,-3){$ M$}
    \put(18,-3){${M-1\quad\cdots}$}
    \put(60,-3){${i_N+1}$}
 \end{picture}
\end{center}

\medskip

\subsection{Deodhar decomposition}\label{sec:4.1}
We briefly review the Deodhar decomposition of the Grassmannian $\Gr(N,M)$ and the associated \emph{Go-diagrams}. Most of the material in this subsection is based on \cite{Deodhar, MR:04, KW:13}.

Throughout this section, we assume that the Young diagram is irreducible, so that $1=i_1<\cdots<i_N<M$. Given such a Young diagram contained in an $N \times (M-N)$ rectangle, we label each box by a simple reflection, with the top-right box labeled by $s_1$, and the labels increasing from right to left along each row and by one when passing to the next row. The resulting labeled diagram is denoted by $O_w$. Note that boxes on the same diagonal in $O_w$ carry the same simple reflection, while those whose south-east corner lies on the boundary represent the first occurrences of the simple reflections.

The generators $s_i$, corresponding to the simple reflections $i\leftrightarrow i+1$, satisfy the following commutation and braid relations:
\begin{align*}
s_{i}s_{j}=s_{j}s_{i}\quad\text{if}\quad |i-j|\geq2 \quad\quad\text{and}\quad\quad s_{i}s_{i+1}s_{i}=s_{i+1}s_{i}s_{i+1}.
\end{align*}

For a fixed labeled diagram $O_w$, the boxes are read from bottom-right to top-left according to a \emph{linear extension} of the underlying poset structure; that is, each box is read after all boxes to its right and below. A choice of reading order determines a reduced expression $w=s_{j_1}\cdots s_{j_m}$ for an element $w\in S_M$. We use the symbol $\w$ to emphasize the order of the simple reflections and write $\w=s_{j_1}\cdots s_{j_m}$ for the fixed reading order. Different reading orders give reduced expressions related by the commutation and braid relations, and therefore represent the same element $w$ (see \cite{KW:13} for details).

\begin{example}
Consider the labeled diagram $O_w$ on the left. Two reading orders are shown in the middle and right diagrams, where the numbers indicate the reading order of the boxes.
\begin{figure}[H]
  \begin{minipage}[t]{1\linewidth}
  \centering
 \includegraphics[height=1.2cm,width=8cm]{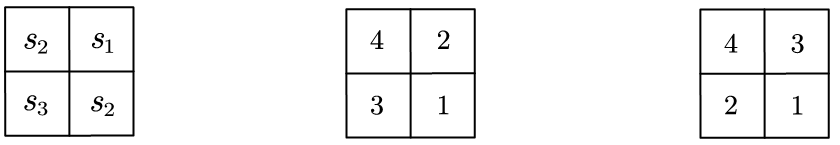}
  \end{minipage}%
\end{figure}
\noindent The corresponding reduced expressions are
\begin{equation*}
\w_{1}=s_{2}s_{1}s_{3}s_{2}, \qquad \w_{2}=s_{2}s_{3}s_{1}s_{2}.
\end{equation*}
\end{example}

Let $\w=s_{j_1}\cdots s_{j_m}$ be a reduced expression. A subexpression $\v$ of $\w$ is obtained by replacing some factors by the identity element $1$, so that
\begin{align*}
\v=v_1\cdots v_m, \qquad v_k \in \{1,s_{j_k}\}.
\end{align*}
For $k\ge 0$, define $v_{(k)}=v_{1}\cdots v_{k}$ with $v_{(0)}=1$. The corresponding group element is given by $v:=v_{(m)}$.

We now recall the notions of distinguished and positive distinguished subexpressions \cite{BB:05, Deodhar, KW:13}.
\begin{definition}
A subexpression $\v$ of $\w$ is called \emph{distinguished} if
\begin{equation}\label{eq:dist}
v_{(k)}\le v_{(k-1)}s_{j_k}\quad \text{for all}\quad k=1,\ldots,m.
\end{equation}
Equivalently, in terms of the length function on $S_M$, whenever right multiplication by $s_{j_k}$ decreases the length of $v_{(k-1)}$, the factor $s_{j_k}$ must be chosen at position $k$ in $\v$. A subexpression $\v$ is called a positive distinguished subexpression (PDS) of $\w$ if
\begin{equation}\label{eq:pds}
v_{(k-1)}< v_{(k-1)}s_{j_k} \quad \text{for all}\quad k=1,\ldots,m.
\end{equation}
Equivalently, it is a distinguished subexpression whose length never decreases. We write $\v\prec\w$ for a distinguished subexpression, and call $\v$ a PDS if it satisfies \eqref{eq:pds}.
\end{definition}

\begin{remark}\cite{KW:13}\label{re:v}
Whether a subexpression $\v$ is distinguished is independent of the choice of reading order.
\end{remark}

Given a subexpression $\v$ of $\w$, we associate to each position $k$ (equivalently, to each box of $O_{w}$ under any reading order) a decoration as follows:
\begin{itemize}
\item place a \raisebox{0.12cm}{\hskip0.14cm\circle{5.5}\hskip-0.1cm} (white stone) if $v_{(k-1)}<v_{(k)}$;
\item place a \raisebox{0.12cm}{\hskip0.14cm\circle*{5.5}\hskip-0.1cm} (black stone) if $v_{(k-1)}>v_{(k)}$;
\item leave the box empty if $v_{(k-1)}=v_{(k)}$.
\end{itemize}
In this way, empty boxes correspond to positions where $v_k=1$, while nonempty boxes correspond to positions where $v_k=s_{j_k}$, with the color determined by whether the
corresponding step is length-increasing or length-decreasing. If $\v$ is distinguished, the resulting decorated Young diagram is called a \emph{Go-diagram}. In this setting,
the failure of total nonnegativity in the Grassmannian is detected by length-decreasing steps, which appear as black stones in the diagram.
\begin{remark}
A $\Le$-diagram, introduced by Postnikov \cite{P:06}, can be viewed as a Go-diagram without black stones. If $\v$ is a PDS of $\w$, then the associated Go-diagram contains no black stones and reduces to a $\Le$-diagram. Such $\Le$-diagrams parametrize positroid cells in the totally nonnegative Grassmannian $\Gr(N,M)_{\ge0}$. In particular, a $\Le$-diagram is characterized by the $\Le$-property: if a box contains a \raisebox{0.12cm}{\hskip0.14cm\circle{5.5}\hskip-0.1cm}, then all boxes to its left or above also contain \raisebox{0.12cm}{\hskip0.14cm\circle{5.5}\hskip-0.1cm} (see \cite{KW:13} for details).
\end{remark}

\begin{remark}
Since boxes on the same diagonal carry the same simple reflection, a black stone is naturally paired with a white stone on the same diagonal, although this diagonal correspondence may be altered by braid relations (see Example~\ref{ex:hirota} below).
\end{remark}
\begin{example}\label{ex:w4}
Let $\w=s_{2}s_{3}s_{1}s_{2}$. We consider the following subexpressions.
\begin{itemize}
\item[(a)] $\v=s_{2}111$ is not a distinguished subexpression, since $v_{(4)}=s_{2}>v_{(3)}s_{2}=1$.
\item[(b)] $\v=1s_{3}s_{1}1$ is a positive distinguished subexpression. Indeed,
\[
v_{(k-1)} < v_{(k-1)} s_{j_k} \qquad \text{for all } k=1,\dots,4.
\]
\item[(c)] $\v=s_{2}11s_{2}$ is a distinguished subexpression, since $v_{(1)}=v_{(2)}=v_{(3)}=s_{2}>v_{(4)}=v_{(3)}s_{2}=1$.
\end{itemize}
The associated diagrams are shown below.
\begin{figure}[H]
  \begin{minipage}[t]{1\linewidth}
  \centering
 \includegraphics[height=1.6cm,width=8.5cm]{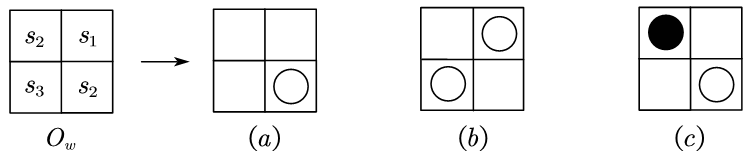}
  \end{minipage}%
\end{figure}
\noindent Note that neither diagram (a) nor (b) contains black stones. However, diagram (a) does not satisfy the $\Le$-property and therefore does not correspond to a $\Le$-diagram. In contrast, diagram (b) arises from a PDS and hence is a $\Le$-diagram, while diagram (c) represents a Go-diagram containing a black stone.
\end{example}

\subsection{Deodhar components and the associated permutations}\label{sec:projections}
Recall from \cite{KW:13} that the positroid cell decomposition of the irreducible totally nonnegative Grassmannian is given by
\begin{equation*}
\Gr(N,M)_{\ge0}^{\mathrm{irr}}=\bigsqcup_{\pi\in D_M}X_\pi,
\end{equation*}
where $D_M\subset S_M$ denotes the set of irreducible derangements on $[M]$. Each positroid cell is uniquely determined by the permutation $\pi$.

The Deodhar decomposition provides a refinement of this decomposition and allows one to describe Grassmannian elements beyond the totally nonnegative part. Each distinguished subexpression $\v\prec\w$ determines a Deodhar component labeled by $(\v,\w)$. For a Grassmannian element belonging to the component labeled by $(\v,\w)$, one obtains a permutation
\begin{equation*}
\pi=vw^{-1},
\end{equation*}
where $v=v_{(m)}$ and $w$ are the group elements represented by $\v$ and $\w$, respectively \cite{MR:04, KW:13}. In particular, if $\v$ is a PDS, then $\pi$ agrees with the permutation indexing the corresponding positroid cell.

Continuing with case $(c)$ of Example \ref{ex:w4}, and viewing $s_i$ as the transposition $i\leftrightarrow i+1$, we obtain
\begin{align*}
\pi=vw^{-1}=(1,3)(2,4).
\end{align*}

\begin{remark}
The permutation $\pi$ determines the underlying Young diagram but does not \emph{uniquely} determine the Go-diagram, since different distinguished subexpressions may yield the same permutation.
\end{remark}
\begin{example}\label{ex:s24}
Fix $\w=s_{2}s_{3}s_{1}s_{2}$ and consider the two distinguished subexpressions
\begin{align*}
\v_{1}=s_{2}11s_{2},\quad\quad\quad\v_{2}=1111.
\end{align*}
Since $v=1$ for both subexpressions, we have $\pi=w^{-1}$. However, the two subexpressions give rise to different Go-diagrams: $\v_{1}$ contains a length-decreasing step, whereas $\v_{2}$ is a PDS (so the associated diagram has no black stones and reduces to a $\Le$-diagram).
\end{example}

\subsection{Composite diagrams of the $\kappa$-direct sum}
We now apply the Deodhar decomposition to the $\kappa$-direct sum. The $\kappa$-direct sum $\widehat{\bigoplus}_{k=1}^K A_k$ is constructed from blocks associated with vertex
operators $V[A_k]$ (see Eq.~\eqref{eq:VO}). Although each block $A_k$ is totally nonnegative, the $\kappa$-direct sum may not be, since cross-ratio factors arising between different blocks may change the signs of the Pl\"ucker coordinates of the $\kappa$-direct sum, leading to singular soliton solutions.

Each totally nonnegative block $A_k\in\Gr(n_k,m_k)_{\ge 0}$ corresponds to a positive distinguished subexpression $\v_k \prec \w_k$, which determines a $\Le$-diagram $\mathcal{L}_{\v_k,\w_k}$.

The permutation of the $\kappa$-direct sum is given by \eqref{eq:kapp}, i.e.
\begin{equation*}
\pi(\widehat{\bigoplus}_{k=1}^K A_k)=\prod_{k=1}^{K}\pi(A_{k}).
\end{equation*}
This permutation determines the underlying Young diagram. Each block $A_k$ contributes a $\Le$-diagram $\mathcal{L}_{\v_k,\w_k}$. It remains to determine how these diagrams fit together.

We now introduce a composite diagram related to the $\kappa$-direct sum $\widehat{\bigoplus}_{k=1}^K A_k\in\Gr(N,M)$, where $N=n_1+\cdots+n_K$ and $M=m_1+\cdots+m_K$.
\begin{definition}
The composite diagram
\begin{align*}
\widehat{\bigoplus}_{k=1}^K\mathcal{L}_{\v_{k},\w_{k}}
\end{align*}
is obtained by embedding each $\Le$-diagram $\mathcal{L}_{\v_{k},\w_{k}}$ into the global Young diagram at the positions determined by the reordered index sets $\hat{\mathcal I}[A_k]$. The remaining boxes are filled with nonempty \emph{labeled stones}, assigned from right to left along each row, proceeding from the bottom row upward (see Example~\ref{ex:cd}).
\end{definition}

We show that the composite diagram determines a distinguished subexpression. Fix a reading order of the global Young diagram, and let $\w^{\oplus}$ be the corresponding reduced expression. The filling of the composite diagram determines a subexpression $\v^{\oplus}$ of $\w^{\oplus}$.
\begin{proposition}\label{pro:dis}
The subexpression $\v^{\oplus}$ determined by the composite diagram is distinguished.
\end{proposition}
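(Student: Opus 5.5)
We need to show that the subexpression $\v^{\oplus}$ read off from the composite diagram is distinguished. Throughout we track the standard dictionary between decorated Young diagrams and wiring (pipe-dream) diagrams: boxes with a stone are crossings, empty boxes are elbows.

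First, the choice of reading order is irrelevant. By Remark~\ref{re:v}, whether $\v^{\oplus}\prec\w^{\oplus}$ holds does not depend on the reading order. So we fix the most convenient linear extension, namely the one compatible with the filling convention: boxes are read from right to left along each row, from the bottom row upward. Then $\v^{\oplus}$ is built up row by row, and we check condition \eqref{eq:dist} at each step.

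Condition \eqref{eq:dist} can only fail at an empty box: a box where $v_k=1$ but $v_{(k-1)}s_{j_k}<v_{(k-1)}$. At a stone box we have $v_{(k)}=v_{(k-1)}s_{j_k}$, so the condition is automatic. Hence the whole proof reduces to the following claim: for every empty box of the composite diagram, right multiplication by its simple reflection increases the length of the current prefix. In the wiring picture, $\ell(v_{(k-1)}s_{j_k})>\ell(v_{(k-1)})$ says exactly that the two pipes entering the box have not yet crossed, i.e. $v_{(k-1)}(j_k)<v_{(k-1)}(j_k+1)$.

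The empty boxes all lie inside the embedded $\Le$-diagrams $\mathcal{L}_{\v_k,\w_k}$, because every other box carries a stone. For the claim we use two facts.

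\begin{enumerate}
\item Each $\v_k$ is a PDS of $\w_k$, so inside $\mathcal{L}_{\v_k,\w_k}$ the two pipes meeting at an empty box have not crossed within block $k$.
\item Every box outside the embedded diagrams has a stone, i.e. it acts as a full crossing.
\end{enumerate}

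Consider a box labelled $(i,j)$ with $i,j\in\hat{\mathcal I}[A_k]$. The two pipes entering it are determined by the boundary labels via $\phi$. The pipes belonging to other blocks only cross it in the inserted stone boxes. Using the $\Le$-property of $\mathcal{L}_{\v_k,\w_k}$ (circles propagate left and up) together with the fact that all inserted boxes are crossings, we would show that the relative order of the two relevant pipes at this box is the same as in the standalone diagram $\mathcal{L}_{\v_k,\w_k}$. Concretely, a row or column segment of inserted crossings only lets a foreign-block pipe pass straight through. Equivalently, conjugating by the embedding $\phi$ turns the prefix restricted to $\hat{\mathcal I}[A_k]$ into the prefix of $\v_k$. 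Given this, fact 1 yields the required inequality.

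The main obstacle is exactly this step: proving that interleaving the inserted full crossings does not change the relative position of two pipes of the same block before they reach an empty box. Pipes from different blocks may cross each other an arbitrary number of times, including pairs that cross twice, which is what produces the black stones later. So the argument has to show that same-block pairs are never affected by these crossings. I would attempt it by induction on rows from the bottom, keeping track of the current permutation restricted to each $\hat{\mathcal I}[A_k]$. The inductive invariant is that this restriction equals the corresponding partial prefix of $\v_k$, transported by $\phi$.
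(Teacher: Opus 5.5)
\textbf{Assessment.} Your reduction is the same as the paper's. Condition \eqref{eq:dist} is automatic at stone boxes, every empty box lies in some embedded $\Le$-diagram, and the PDS property of $\v_k$ should then give the length-increasing condition there. The paper's proof does exactly this and takes the transfer for granted (``all steps coming from the embedded $\Le$-diagrams are length-increasing''). You correctly isolate that transfer as the crux, but you leave it as ``we would show'', so your proposal has a gap precisely where the content lies. The tools you point to are also not what settle it. The $\Le$-property plays no role. Your sentence that inserted crossings ``only let a foreign-block pipe pass straight through'' misses the point: every pipe, of any block, passes straight through an inserted box. What matters is \emph{which two pipes meet} at such a box.

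\textbf{The missing observation.}
\begin{itemize}
\item[(1)] A pipe changes direction only at an empty box. Every empty box $(r,c)$ has $r$ and $c$ in the same block $\hat{\mathcal I}[A_k]$.
\item[(2)] A pipe starting at a label in $\hat{\mathcal I}[A_k]$ starts in a row or column of block $k$. By induction on its turns, it only ever travels along rows and columns indexed by $\hat{\mathcal I}[A_k]$.
\item[(3)] Take an inserted box $(r,c)$ with $r\in\hat{\mathcal I}[A_k]$ and $c\in\hat{\mathcal I}[A_{k'}]$, $k\neq k'$. The two pipes crossing there are the row-$r$ pipe, from block $k$, and the column-$c$ pipe, from block $k'$. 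So inserted stones never cross two pipes of the same block.
\item[(4)] Hence two block-$k$ pipes can cross only at boxes of $\mathcal L_{\v_k,\w_k}$. Since $\phi$ is order preserving, their routes through those boxes are exactly their routes in the standalone diagram.
\end{itemize}

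\textbf{Closing the proof.} Point (4) proves your inductive invariant at once. Each letter of $\v^{\oplus}$ is of one of three kinds:
\begin{itemize}
\item[(a)] an elbow, which changes nothing;
\item[(b)] a crossing of two block-$k$ pipes, which mirrors the next letter of $\v_k$ in the induced reading order;
\item[(c)] a crossing of pipes from different blocks, which leaves the relative order within every block unchanged.
\end{itemize}
Now let $b$ be an empty box of $\mathcal L_{\v_k,\w_k}$, read at position $p$. By (2), both pipes entering $b$ are block-$k$ pipes. Their relative order equals the one in the standalone diagram, where they are non-inverted because $\v_k$ is a PDS. Hence $v^{\oplus}_{(p-1)}(j_p)<v^{\oplus}_{(p-1)}(j_p+1)$, which is the required inequality.

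One smaller correction: ``have not yet crossed'' should read ``are not inverted'' (crossed an even number of times), since pipes from different blocks can cross twice. For same-block pairs, ``not yet crossed'' is accurate once the above is established.
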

\begin{proof}
By the characterization of distinguished subexpressions, it suffices to show that for each $k$, whenever
\begin{equation*}
v^{\oplus}_{(k-1)}s_{j_k}<v^{\oplus}_{(k-1)},
\end{equation*}
the corresponding simple reflection $s_{j_k}$ is chosen in $\v^{\oplus}$. Since each embedded $\Le$-diagram is preserved in the composite diagram and corresponds to a PDS,
all steps coming from the embedded $\Le$-diagrams are length-increasing. Any length-decreasing step must occur in the additional boxes, which are nonempty by definition
of the composite diagram. Thus the corresponding simple reflection must be chosen in $\v^{\oplus}$.
\end{proof}
\begin{example}\label{ex:cd}
We continue Example \ref{ex:l5}. The $\Le$-diagrams $\mathcal{L}_{\v_{1},\w_{1}}$ and $\mathcal{L}_{\v_{2},\w_{2}}$ corresponding to $A_{1}$ and $A_{2}$ are shown on the left.
The reordered index sets are $\hat{\mathcal I}[A_1]=\{1,3,4,6\}$ and $\hat{\mathcal I}[A_2]=\{2,5,7\}$. The resulting composite diagram is shown in Figure \ref{fig:v1v2}.
\begin{figure}[H]
  \begin{minipage}[t]{1\linewidth}
  \centering
 \includegraphics[height=1.5cm,width=8.5cm]{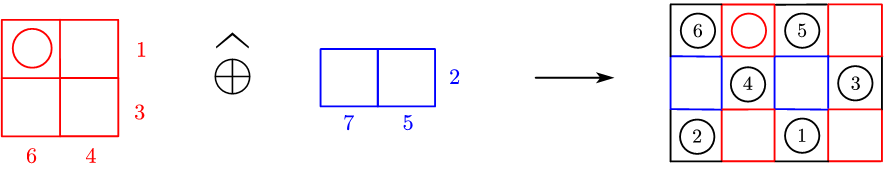}
  \end{minipage}%
  \caption{The composite diagram $\mathcal{L}_{\v_{1},\w_{1}}\widehat{\bigoplus}\mathcal{L}_{\v_{2},\w_{2}}$.}\label{fig:v1v2}
\end{figure}
\noindent We fix a reading order in which each row is read from right to left, proceeding row by row from bottom to top. The corresponding reduced expressions and positive
distinguished subexpressions are
\begin{align*}
\w_{1}=s_{2}s_{3}s_{1}s_{2}, \quad \v_{1}=111s_{2}
\quad\text{and}\quad
\w_{2}=s_{1}s_{2}, \quad \v_{2}=11.
\end{align*}
Applying the same reading order to the composite diagram, we have
\begin{equation*}
\w^{\oplus}=s_{3}s_{4}s_{5}s_{6}s_{2}s_{3}s_{4}s_{5}s_{1}s_{2}s_{3}s_{4},\quad\quad \v^{\oplus}=1s_{4}1s_{6}s_{2}1s_{4}11s_{2}s_{3}s_{4}.
\end{equation*}
\end{example}

To determine whether each labeled stone is either white or black, we introduce the associated in-region and out-region. More precisely, the type of each labeled stone is determined by a suitable in-region and can be identified without computing the entire subexpression $\v^{\oplus}$.

\begin{definition}\label{def:inout}
For any box $b$ in the composite diagram, we define the in-region $Y_b^{\In}$ to be the set of boxes weakly south-east of $b$, and the out-region $Y_b^{\Out}$ to be its complement.
\end{definition}

We index the boxes of the Young diagram by $(i,j)$, where $i$ increases from top to bottom and $j$ from left to right. For any box $b=(i,j)$, one can always choose a reading order in which all boxes in $Y_b^{\In}$ are read before those in $Y_b^{\Out}$ \cite{KW:13}. This gives a factorization
\begin{equation*}
\w^{\oplus}=\w^{\In}\w^{\Out}, \qquad \v^{\oplus}=\v^{\In}\v^{\Out}.
\end{equation*}
Since the distinguished structure of $\v^{\oplus}$ is independent of the chosen reading order (see Remark \ref{re:v}), the type of the labeled stones in the in-region $Y_b^{\In}$ is determined by the subexpression $\v^{\In}$. This provides a convenient way to distinguish white and black stones, which in the PDS case is characterized combinatorially by the $\Le$-property.
\begin{remark}\label{rem:Le-in-region}
Let $b$ be a box in the composite diagram. If the filling restricted to the in-region $Y_b^{\In}$ satisfies the $\Le$-property, then all labeled stones in $Y_b^{\In}$ are white. In this case, if $b$ is labeled, then it is white.
\end{remark}
The filling of the composite diagram can be described in terms of nested in-regions. As new boundary layers are added, the filling on the previous region remains unchanged,
and only the new boxes need to be determined. As long as the $\Le$-property is preserved, the new labeled stones are white; black stones appear precisely when the $\Le$-property fails for the first time in the added part.
\begin{example}
We illustrate the filling process using the composite diagram in Figure \ref{fig:v1v2}. Consider the in-region associated with $b=(2,3)$. The restriction of the filling to $Y_{(2,3)}^{\In}$ satisfies the $\Le$-property, and hence the labeled stones \tikz[baseline=(char.base)]{
  \node[draw, circle, minimum size=1.0em, inner sep=0pt] (char) {\small 1};} and \tikz[baseline=(char.base)]{
  \node[draw, circle, minimum size=1.0em, inner sep=0pt] (char) {\small 3};} are white.

Consider next the in-region associated with $b=(1,3)$, obtained by extending $Y_{(2,3)}^{\In}$ upward. Here the $\Le$-property fails for the first time at the newly added labeled stone \tikz[baseline=(char.base)]{ \node[draw, circle, minimum size=1.0em, inner sep=0pt] (char) {\small 5};}, which is therefore black.
\begin{figure}[H]
  \begin{minipage}[t]{1\linewidth}
  \centering
 \includegraphics[height=3.5cm,width=8.3cm]{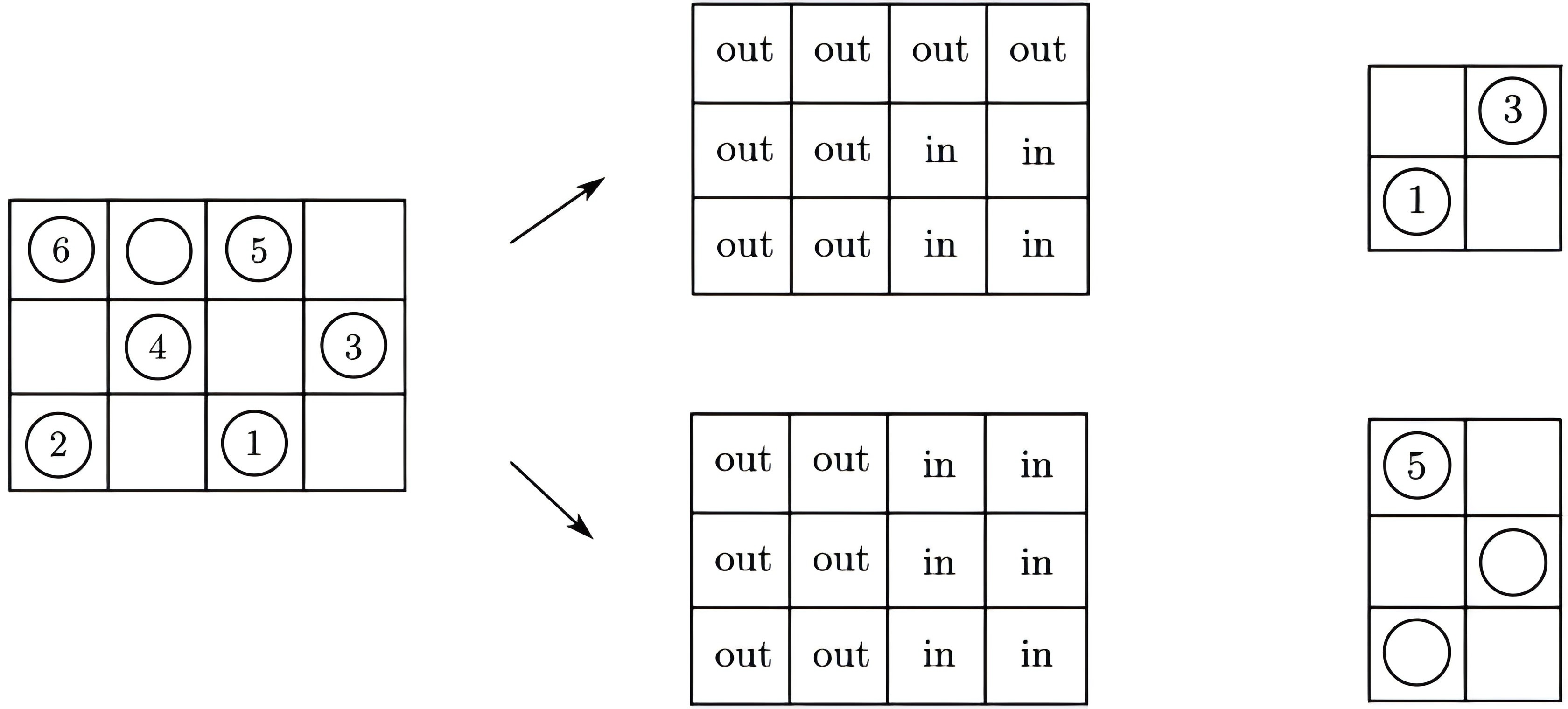}
  \caption{The in-regions associated with $b=(2,3)$ (top) and $b=(1,3)$ (bottom).}
  \end{minipage}%
\end{figure}
\noindent This example illustrates the transition from regions satisfying the $\Le$-property to its first violation.
\end{example}

\begin{remark}
The in-region description identifies the portion of the diagram relevant to a labeled stone. To determine the type of successive labeled stones, it is convenient to use the pipe dream representation, where their types can be identified by tracking the corresponding pipes. This viewpoint will be developed in the next subsection.
\end{remark}

\subsection{Go-diagrams via pipe dream}\label{sec4.4}
We now use the \emph{pipe dream} representation to determine the type of the labeled stones.

By Proposition~\ref{pro:dis}, the subexpression $\v^{\oplus}$ determined by the composite diagram is distinguished. The composite diagram therefore defines a Go-diagram in the Deodhar decomposition and admits a pipe dream representation.

This representation is constructed as follows: empty boxes correspond to elbows with bridges, while nonempty boxes correspond to crossings (see \cite{K:17, KW:13}). Boxes carrying labeled stones correspond to crossings. Their types will be determined below by tracking the corresponding pipes.
\begin{figure}[H]
  \begin{minipage}[t]{1\linewidth}
  \centering
 \includegraphics[height=0.7cm,width=6.0cm]{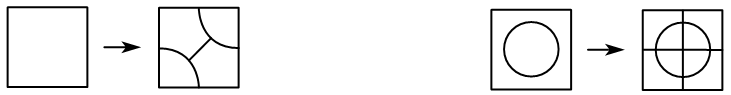}
  \end{minipage}%
\end{figure}
We label the south-east boundary from $1$ to $M$, from top to bottom. The permutation $\pi$ is obtained by tracking each pipe from $i$ on the south-east boundary to $\pi(i)$ on the north-west boundary.

By construction, this permutation agrees with the factorization in \eqref{eq:kapp}. The embedded $\Le$-diagrams contribute the factors $\pi(A_k)$, while the additional boxes correspond to crossing tiles and do not affect the internal connectivity of the pipes within each block.
\begin{lemma}\label{lem:pipe}
The pipe dream associated with each embedded $\Le$-diagram is preserved in the composite diagram.
\end{lemma}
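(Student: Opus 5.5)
The plan is to argue directly from the local description of the pipe dream (Le-)diagram: a box is a crossing tile if it carries a stone and an elbow tile if it is empty, and the pipe dream of a diagram is just the arrangement of these tiles. The key observation is that the boxes of $\mathcal{L}_{\v_k,\w_k}$ embedded in the composite diagram are exactly the boxes $(i,j)$ with $i$ a pivot and $j$ a nonpivot index lying in $\hat{\mathcal I}[A_k]$. All remaining boxes, those mixing indices from different blocks, are nonempty by definition of the composite diagram, so they are crossing tiles.

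The first step is to fix the labeling. The rows of the global Young diagram are indexed by pivots of $\widehat{\bigoplus}_k A_k$, which are the union of the pivots of the $A_k$ via $\phi$ because of the RREF arrangement in Definition \ref{def:SumA}. Similarly, the columns are indexed by the union of the nonpivots. Hence the rows and columns belonging to block $k$ form a subgrid, and the order-preserving bijection $\phi$ identifies this subgrid with the Young diagram of $A_k$, with relative positions preserved.

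The second step is to follow a pipe that enters the global diagram at a boundary label $p\in\hat{\mathcal I}[A_k]$. Each pipe travels along its row and column segments. Whenever it passes through a box whose row or column belongs to a different block, that box is a crossing tile, so the pipe continues straight. Consequently, a pipe turns only at elbows, and every elbow lies in an empty box of the embedded $\mathcal{L}_{\v_k,\w_k}$. Between two consecutive block-$k$ boxes, the pipe moves straight along a row or column index of block $k$, through foreign boxes only. Collapsing the foreign rows and columns therefore yields exactly the pipe in the pipe dream of $\mathcal{L}_{\v_k,\w_k}$.

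The third step is to check the boundary. Deleting foreign rows and columns maps the south-east boundary step labeled $\phi(p)$ to the step labeled $p$ in the diagram of $A_k$, and likewise on the north-west boundary. This is because boundary labels are exactly the pivot and nonpivot indices in increasing order, and $\phi$ is order-preserving. From this I conclude that the pipe starting at $p$ exits at $\pi(A_k)(p)$, which matches \eqref{eq:kapp}. I also conclude that the local tile configuration within block $k$ is identical to its configuration in the original diagram.

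The main obstacle I expect is the boundary bookkeeping in the third step: verifying that the collapsing map sends the staircase boundary of the global diagram onto that of each block's diagram. This requires care when a block's index is a pivot at one end and its neighbors belong to other blocks. I would handle it by induction on $K$, removing one foreign index at a time and checking that deleting a row (pivot) or a column (nonpivot) consisting entirely of crossing tiles leaves the remaining pipes' connectivity unchanged.
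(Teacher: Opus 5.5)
Your proposal is correct and follows the paper's argument: the paper's one-line proof says exactly that the embedded fillings are unchanged and every additional box is a crossing tile, and your steps 1--3 spell this out (block-$k$ rows and columns form a subgrid, block-$k$ pipes run straight through foreign boxes, and collapsing foreign lines matches boundary labels via the order-preserving $\phi$). One correction to your last paragraph: a foreign row or column need not consist entirely of crossing tiles, since it can meet its own block's embedded $\Le$-diagram in elbows; however, only its intersections with block-$k$ lines matter and those are crossings, so the direct tracking in step 2 already suffices and no induction on $K$ is needed.
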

\begin{proof}
This follows directly from the construction: the filling of each embedded $\Le$-diagram is unchanged, and the additional boxes contribute only crossing tiles.
\end{proof}

An illustration of Lemma~\ref{lem:pipe} for the composite diagram in Figure~\ref{fig:v1v2} is shown in Figure~\ref{fig:pd}. In this example, tracking the pipes gives the permutation
$\pi=(1,3,6,4)(2,7,5)$.
\begin{figure}[H]
  \begin{minipage}[t]{1\linewidth}
  \centering
 \includegraphics[height=2cm,width=9cm]{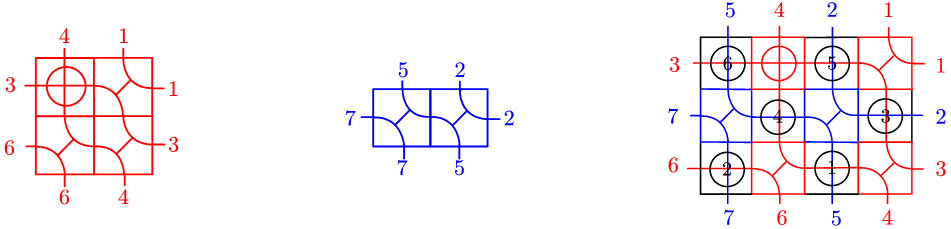}
\caption{Pipe dream associated with the composite diagram in Figure~\ref{fig:v1v2}.}\label{fig:pd}
  \end{minipage}%
\end{figure}

Given a labeled stone in the composite diagram, we consider the two pipes passing through the corresponding box in the pipe dream. Each pipe connects a boundary label $i$ on the south-east boundary to the corresponding label $\pi(i)$ on the north-west boundary, thereby determining a line-soliton labeled by $[i,\pi(i)]$, or simply by $[i,j]$ with $\pi(i)=j$.

As illustrated in Figure~\ref{fig:sol}, a crossing box determines two line-solitons $[i,j]$ and $[k,l]$, where $\pi(i)=j$ and $\pi(k)=l$. Without loss of generality, we assume that $i<k$. This pairing determines the type of the labeled stone.
\begin{figure}[H]
  \begin{minipage}[t]{1\linewidth}
  \centering
 \includegraphics[height=2.7cm,width=11cm]{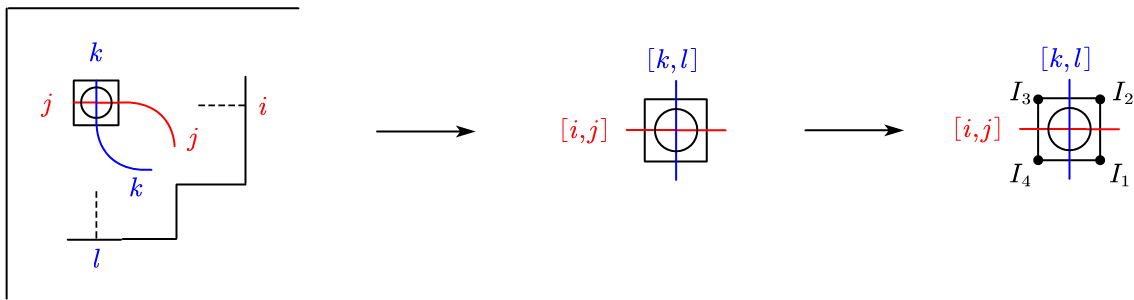}
 \caption{Index pairing in the pipe dream.}\label{fig:sol}
  \end{minipage}
\end{figure}
\noindent We refer to such a configuration as an $X$-crossing, associated with a pair of line-solitons $[i,j]$ and $[k,l]$, with pairwise distinct indices $i,j,k,l$.
We denote by $\mathcal{B}$ the set of boxes carrying labeled stones corresponding to such configurations. For each box $b\in\mathcal{B}$, the associated line-solitons can be read off from the pipe configuration restricted to the in-region $Y_b^{\In}$, which is the relevant region for determining the type of the labeled stone.

Notice that an $X$-crossing gives rise to four adjacent dominant exponential regions with associated index sets $I_1,I_2,I_3,I_4$, arranged counterclockwise around the $X$-crossing. Its type is determined by the relative positions of the associated line-soliton pairs in the chord diagram. We have the following classification.
\begin{definition}\label{def:OTP}
Let $[i,j]$ and $[k,l]$ be two line-solitons with $i<k$. Their relative positions in the chord diagram are classified as follows:
\begin{itemize}
\item[(a)] $P$-type if $i<k<l<j$, i.e. one pair completely contains the other;
\item[(b)] $T$-type if $i<k<j<l$, i.e. the two pairs partially overlap;
\item[(c)] $O$-type if $i<j<k<l$, i.e. the two pairs are disjoint.
\end{itemize}
\begin{figure}[H]
  \begin{minipage}[t]{1\linewidth}
  \centering
 \includegraphics[height=1.6cm,width=10cm]{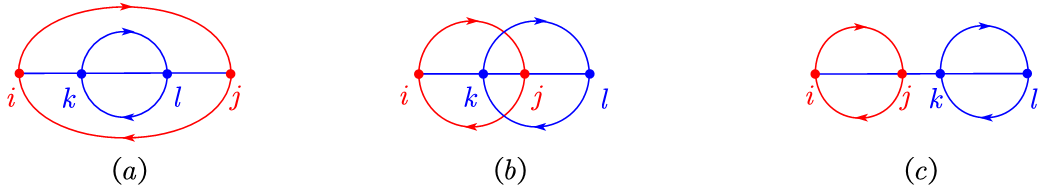}
  \end{minipage}%
  \caption{Chord diagrams illustrating these three types.}
\end{figure}
\end{definition}

\begin{remark}
For boxes in $\mathcal{B}$, the pipe dream structure of the composite diagram imposes a non-nesting ordering, which excludes $P$-type cases (see Figure \ref{fig:sol}),
so that only $O$- and $T$-types can occur.
\end{remark}

The following theorem shows that the type of an $X$-crossing determines the corresponding sign relation among the Pl\"ucker coordinates.
\begin{theorem}\label{th:PC}
Let $A=\widehat{\bigoplus}_{k=1}^K A_k\in \Gr(N,M)$, and fix an $X$-crossing in its composite diagram. Let $I_1,I_2,I_3,I_4$ be the index sets of the four dominant exponential regions adjacent to the $X$-crossing, arranged counterclockwise. The following relations hold.
\begin{itemize}
\item[(a)] If the $X$-crossing is of $O$-type, then $\Delta_{I_{1}}(A)\Delta_{I_{3}}(A)=\Delta_{I_{2}}(A)\Delta_{I_{4}}(A)$.
\item[(b)] If the $X$-crossing is of $T$-type, then $\Delta_{I_{1}}(A)\Delta_{I_{3}}(A)=-\Delta_{I_{2}}(A)\Delta_{I_{4}}(A)$.
\end{itemize}
\end{theorem}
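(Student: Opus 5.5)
We would reduce the statement to a local computation near the $X$-crossing, in the spirit of the vertex operator expansion of Section~\ref{sec:3}. Around the crossing of the line-solitons $[i,j]$ and $[k,l]$, the four adjacent dominant index sets share a common part. There is a set $S$ of size $N-2$, or one of size $N-1$ if indices coincide with a pivot, such that the four sets have the form
\[
I_1=S\cup\{i,k\},\quad I_2=S\cup\{j,k\},\quad I_3=S\cup\{j,l\},\quad I_4=S\cup\{i,l\},
\]
up to the cyclic labelling. Crossing the soliton $[i,j]$ swaps $i\leftrightarrow j$, and crossing $[k,l]$ swaps $k\leftrightarrow l$. The first step is to justify this description from Theorem~\ref{th:pairing} and the pipe dream of Lemma~\ref{lem:pipe}, restricted to the in-region $Y_b^{\In}$. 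The key structural input is that the pipes through the crossing box belong to \emph{different} blocks $A_p\neq A_q$. Within a block the embedded $\Le$-diagram is unchanged, so the crossings inside it are not additional labeled boxes.

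Next we would compute these four Pl\"ucker coordinates through the $\tau$-function $V[A_1]\cdots V[A_K]\cdot 1$. Using the commutation in Lemma~\ref{lem:X}, the coefficient of the exponential $E_{I}$ factorises as a product of block contributions times the product of all cross-ratio factors $C_{[(\cdot,\cdot),(\cdot,\cdot)]}$ between operators taken from different blocks. Passing from $I_1$ to the other $I_r$ changes only the operators involving $i,j$ (from $A_p$) and $k,l$ (from $A_q$). We form the ratio
\[
\frac{\Delta_{I_1}\Delta_{I_3}}{\Delta_{I_2}\Delta_{I_4}}.
\]
In it, the block-internal factors cancel, because each block is a genuine Pl\"ucker vector of $A_p$ or $A_q$ and the four sets restrict to the same pairs of subsets of each block. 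All cross-ratios with third blocks also cancel in pairs. What remains is the single cross-ratio factor $C_{[(i,j),(k,l)]}$ appearing with exponent $\pm1$, together with positive Vandermonde-type normalisations coming from $E_I$ and the $b_{k,j}$.

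Finally we would read off the sign from Lemma~\ref{lem:X}. By the remark preceding the theorem, $P$-type is excluded. For $i<j<k<l$ ($O$-type) the factor $C_{[(i,j),(k,l)]}$ is positive. For $i<k<j<l$ ($T$-type) it is negative. The identities in (a) and (b) are then equalities of signs, i.e. of the relevant relations after the positive normalisations are absorbed, which is how the statement should be interpreted. To make (a) an exact equality, we would instead use the three-term Pl\"ucker relation on $S\cup\{i,j,k,l\}$. The coordinate $\Delta_{S\cup\{i,j\}}$ or $\Delta_{S\cup\{k,l\}}$ vanishes, because the corresponding term would require two operators from the same block hitting one pivot, which gives zero by Lemma~\ref{lem:X}. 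The relation then collapses to $\Delta_{I_1}\Delta_{I_3}=\pm\Delta_{I_2}\Delta_{I_4}$, with the sign fixed by the cyclic order of $i,j,k,l$ together with the sign computation above.

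The main obstacle is the first step: proving rigorously that the four regions around a labeled box are indexed exactly as above. A related difficulty is that the vanishing Pl\"ucker coordinate needed for the three-term relation really vanishes for the $\kappa$-direct sum, and not merely asymptotically. We would attack this by an induction over nested in-regions, which tracks the dominant sets as one boundary layer is added at a time.
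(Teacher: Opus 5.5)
Your final argument is the paper's own proof. You write the four regions as $S\cup\{i,k\}$, $S\cup\{k,j\}$, $S\cup\{j,l\}$, $S\cup\{i,l\}$, apply the three-term Pl\"ucker relation on $S\cup\{i,j,k,l\}$, discard $\Delta_{S\cup\{i,j\}}\Delta_{S\cup\{k,l\}}$, and read the sign off the ordering ($i<j<k<l$ versus $i<k<j<l$).

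The paper discards that term only because it ``corresponds to the pairing not realized by the crossing''. You are right that its vanishing has to be proved, and that it rests on the two pipes coming from different blocks. That hypothesis is essential. In the $\Le$-diagram of the top cell of $\Gr(2,4)_{\ge0}$, a crossing box carries the pipes of $[1,3]$ and $[2,4]$, which is $T$-type, yet all Pl\"ucker coordinates are positive. However, the vanishing does not come from Lemma~\ref{lem:X}, and it needs no induction over in-regions. After permuting rows and columns, the $\kappa$-direct sum is block diagonal. Hence $\Delta_I\neq0$ forces $|I\cap\hat{\mathcal I}[A_r]|=n_r$ for every block $r$. Now take $\{i,j\}\subset\hat{\mathcal I}[A_p]$ and $\{k,l\}\subset\hat{\mathcal I}[A_q]$ with $p\neq q$. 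Nonvanishing of $\Delta_{S\cup\{i,j\}}$ would require $|S\cap\hat{\mathcal I}[A_p]|=n_p-2$. Nonvanishing of $\Delta_{S\cup\{k,l\}}$ would require $|S\cap\hat{\mathcal I}[A_p]|=n_p$. So the product is identically zero, not just asymptotically. In vertex-operator language, every $X(\kappa_a,\kappa_b)$ in $V[A_r]$ preserves the number of indices in each block. That conservation is the mechanism, not the vanishing of products with a repeated pivot.

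Two further corrections. First, do not reinterpret (a) and (b) as statements about signs. They are exact identities, as the computations in Example~\ref{ex:reducepd} show. Once the third term is gone, the Pl\"ucker relation gives both (a) and (b) exactly, with the sign fixed by the ordering alone. Your vertex-operator ratio computation is therefore unnecessary; it only recovers the sign, and only when all four coordinates are nonzero. Second, for four distinct indices $S$ always has $N-2$ elements. The shape of the four index sets is forced once one grants, as the paper does, that crossing an $[a,b]$-soliton exchanges $a$ and $b$ in the dominant index set.
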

\begin{proof}
The four regions adjacent to the given $X$-crossing correspond to index sets $I_1,I_2,I_3,I_4$ differing only in the indices $i,j,k,l$. Let $J$ be their common $(N-2)$-element subset. The index sets are given by
\begin{align*}
I_{1}=J\cup\{i,k\},\quad I_{2}=J\cup\{k,j\},\quad I_{3}=J\cup\{j,l\},\quad I_{4}=J\cup\{i,l\}.
\end{align*}
Here we use the fact that adjacent regions differ by a single index, since each line-soliton arises from the balance between two dominant exponentials whose index sets differ by exactly one element. Applying the Pl\"ucker relation to the four indices $i,j,k,l$ with $i<j<k<l$, we obtain
\begin{align*}
\Delta_{J\cup\{i,j\}}(A)\Delta_{J\cup\{k,l\}}(A)-\Delta_{J\cup\{i,k\}}(A)\Delta_{J\cup\{j,l\}}(A)
+\Delta_{J\cup\{i,l\}}(A)\Delta_{J\cup\{j,k\}}(A)=0.
\end{align*}
For the ordering $i<k<j<l$, the first two products are interchanged. In both orderings, the term $\Delta_{J\cup\{i,j\}}(A)\Delta_{J\cup\{k,l\}}(A)$ corresponds to the pairing not realized by the given $X$-crossing. Thus the sign relation associated with the crossing is determined by the remaining two terms. If the $X$-crossing is of $O$-type, we have
\begin{align*}
\Delta_{I_1}(A)\Delta_{I_3}(A)=\Delta_{I_2}(A)\Delta_{I_4}(A).
\end{align*}
If the $X$-crossing is of $T$-type, the interchange of the first two terms changes the sign and gives
\begin{align*}
\Delta_{I_1}(A)\Delta_{I_3}(A)=-\Delta_{I_2}(A)\Delta_{I_4}(A).
\end{align*}
\end{proof}

\begin{remark}\cite{KW:13}
An $X$-crossing in the pipe dream determines a pair of line-solitons that intersect in the corresponding soliton graph for $t\to -\infty$. This correspondence becomes more transparent after reducing the pipe dream by deleting single-index edges and contracting non-branching chains.
\end{remark}

For the $\kappa$-direct sum, the asymptotic soliton graph has a more restricted structure. Line-solitons arising from different blocks may intersect, but such intersections do not produce the opening of a box at the intersection point (see Section~6.4 of \cite{CK:09}). This is due to the fact that no additional free parameters arise from the combination of different blocks. Moreover, according to Theorem~\ref{th:PC}, the type of the corresponding $X$-crossing determines the sign relation among the adjacent dominant exponential regions, thereby determining whether the intersecting line-solitons preserve or change their regularity when passing through the intersection.
\begin{example}\label{ex:reducepd}
We illustrate the $X$-crossings arising in the composite diagram for the $\kappa$-direct sum $A_{1}\widehat{\bigoplus}A_{2}$ from Example~\ref{ex:l5}. Figure~\ref{fig:CP37} shows the corresponding reduced pipe dream. The associated soliton graph is displayed in Figure~\ref{fig:Gr37} for comparison.
\begin{figure}[H]
  \begin{minipage}[t]{1\linewidth}
  \centering
  \includegraphics[height=2.1cm,width=8cm]{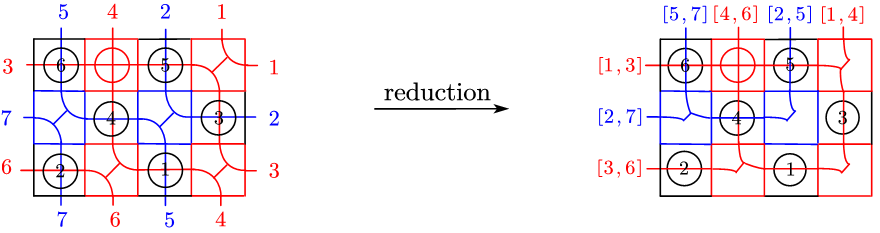}
  \caption{Pipe dream associated with the composite diagram and its reduced form.}\label{fig:CP37}
  \end{minipage}
  \end{figure}
  \begin{figure}[H]
  \begin{minipage}[t]{1\linewidth}
  \centering
 \includegraphics[height=2.9cm,width=8.8cm]{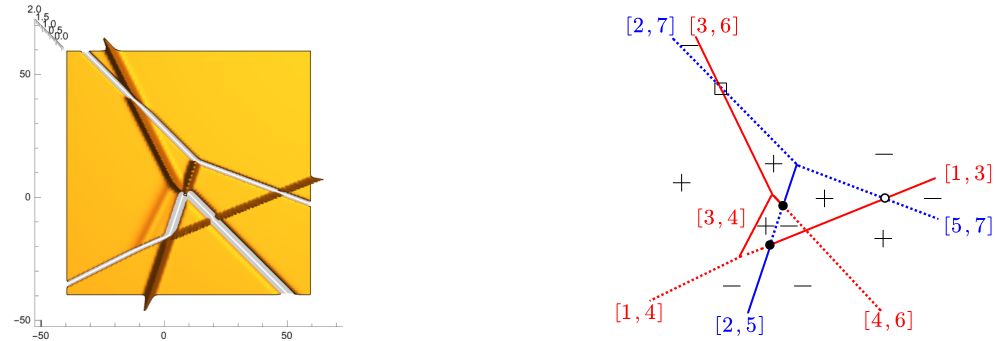}
  \end{minipage}%
\caption{Soliton graph of the solution associated with $A_{1}\widehat{\bigoplus}A_{2}$ at $t=-10$. The signs ``$+$'' and ``$-$'' assigned to each region indicate the sign of the Pl\"ucker coordinate corresponding to the dominant exponential in that region.\label{fig:Gr37}}
\end{figure}
\noindent We take $(\kappa_{1},\ldots,\kappa_{7})=(-2,-1,-\tfrac{1}{2},0,\tfrac{2}{3},1,2)$. In this example, the labeled stones \tikz[baseline=(char.base)]{
  \node[draw, circle, minimum size=1.0em, inner sep=0pt] (char) {\small 4};} \tikz[baseline=(char.base)]{
  \node[draw, circle, minimum size=1.0em, inner sep=0pt] (char) {\small 5};} \tikz[baseline=(char.base)]{
  \node[draw, circle, minimum size=1.0em, inner sep=0pt] (char) {\small 6};} correspond to boxes in $\mathcal{B}$.
From the reduced pipe dream in Figure~\ref{fig:CP37}, we can read off the pairs of line-solitons associated with the labeled stones. The corresponding $X$-crossings are reflected in the
soliton graph shown in Figure~\ref{fig:Gr37}. Solid lines denote regular line-solitons (adjacent dominant exponentials with the same sign) and dashed lines denote singular solitons (opposite signs). This example contains $X$-crossings of both $T$- and $O$-types. For instance:
\begin{itemize}
\item[(a)] The $X$-crossing formed by the $[2,5]$- and $[4,6]$-solitons of $T$-type (black circle in Figure \ref{fig:Gr37}) gives
\begin{align*}
\Delta_{I_{1}}=\Delta_{126}=a_{3,6},\quad \Delta_{I_{2}}=\Delta_{124}=a_{3,4},\quad \Delta_{I_{3}}=\Delta_{145}=-a_{2,5}a_{3,4},\quad \Delta_{I_{4}}=\Delta_{156}=a_{2,5}a_{3,6}.
\end{align*}
\item[(b)] The $X$-crossing formed by the $[1,3]$- and $[5,7]$-solitons of $O$-type (white circle in Figure \ref{fig:Gr37}) gives
\begin{align*}
\Delta_{I_{1}}=\Delta_{156}=a_{2,5}a_{3,6},\quad \Delta_{I_{2}}=\Delta_{356}=-a_{1,6}a_{2,5},\quad \Delta_{I_{3}}=\Delta_{367}=a_{1,6}a_{2,7},\quad \Delta_{I_{4}}=\Delta_{167}=-a_{2,7}a_{3,6}.
\end{align*}
\end{itemize}
These computations verify the sign relations in Theorem~\ref{th:PC}.
\end{example}

\begin{remark}
The composite diagram records only those $X$-crossings corresponding to boxes in $\mathcal B$, arising from intersections of pipes associated with different blocks. Thus,
although two line-solitons may intersect in the soliton graph with index ordering $i<k<l<j$ (such as the square vertex in Figure~\ref{fig:Gr37}) for certain values of $t$,
they do not correspond to elements of $\mathcal{B}$.
\end{remark}

Combining the Pl\"ucker relations (see, e.g. \cite[Corollary 5.8]{KW:13}) with Theorem~\ref{th:PC}, we obtain the following characterization of labeled stone types.
\begin{corollary}\label{cor:wb}
For each box $b\in\mathcal{B}$, the filling of the labeled stone is determined by the type of the corresponding $X$-crossing. In particular,
\begin{itemize}
\item[(a)] $O$-type $X$-crossings correspond to white stones;
\item[(b)] $T$-type $X$-crossings correspond to black stones.
\end{itemize}
\end{corollary}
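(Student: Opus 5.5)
The plan is to combine the sign relations of Theorem~\ref{th:PC} with the characterization of white and black stones in Go-diagrams through Pl\"ucker coordinates of the four adjacent dominant regions, as in \cite[Corollary 5.8]{KW:13}. Fix $b\in\mathcal{B}$ and choose a reading order in which the in-region $Y_b^{\In}$ is read first. This gives $\w^{\oplus}=\w^{\In}\w^{\Out}$ and $\v^{\oplus}=\v^{\In}\v^{\Out}$. By Remark~\ref{re:v}, the stone type at $b$ is then determined by the step of $\v^{\In}$ at $b$ alone. By Proposition~\ref{pro:dis}, $b$ is nonempty, so we only have to decide whether $v_{(k-1)}<v_{(k-1)}s_{j_k}$ (white) or $v_{(k-1)}>v_{(k-1)}s_{j_k}$ (black).

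The first step is to translate this length condition into pipes. In the pipe dream of the in-region, the box $b$ is a crossing of two pipes, say those ending at $j=\pi(i)$ and $l=\pi(k)$ with $i<k$ (Lemma~\ref{lem:pipe} and Figure~\ref{fig:sol}). Right multiplication of $v_{(k-1)}$ by the simple reflection of $b$ swaps two adjacent values. The step increases length exactly when the two pipes have not yet crossed in the part already read, and decreases it when they have crossed an odd number of times before. So a black stone means the two pipes cross ``a second time'' at $b$.

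The second step relates this to signs. \cite[Corollary 5.8]{KW:13} says the following about the four regions $I_1,\dots,I_4$ around a crossing. At a white stone, the Deodhar parametrization gives $\Delta_{I_1}\Delta_{I_3}$ and $\Delta_{I_2}\Delta_{I_4}$ of the same sign, and the two products are equal once the ``box-opening'' parameter vanishes. At a black stone the relation carries the opposite sign: the black-stone factor $-1$, or a free parameter entering with a minus sign. For the $\kappa$-direct sum no box opens, by the discussion after Theorem~\ref{th:PC}, since no new parameters are created between blocks. Hence the Go-diagram relation at $b$ is exactly $\Delta_{I_1}\Delta_{I_3}=\pm\Delta_{I_2}\Delta_{I_4}$, with $+$ for white and $-$ for black. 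Theorem~\ref{th:PC} then decides the sign: $O$-type gives $+$ and $T$-type gives $-$. The remark preceding Theorem~\ref{th:PC} excludes $P$-type, so the dichotomy is complete.

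The main obstacle is the comparison of sign conventions, because generically some of the Pl\"ucker coordinates involved could vanish. I would handle it by choosing the blocks' parameters generic, so that all four $\Delta_{I_r}$ are nonzero, and then checking the sign conventions of \cite{KW:13} on the minimal configuration of two one-solitons from the introduction. Case (b) there, $i<k<j<l$, has a negative cross ratio and produces the black stone, which fixes the normalization. The general case then follows by restricting to the in-region. I am least sure that restricting to $Y_b^{\In}$ does not change the relevant Pl\"ucker signs; I would argue this by locality of the Deodhar parametrization (\cite{MR:04}).
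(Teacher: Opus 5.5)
Your second step is the paper's argument. The paper obtains the corollary in one line by combining the stone-by-stone Pl\"ucker sign relations of \cite[Corollary 5.8]{KW:13} with the $O$/$T$ dichotomy of Theorem~\ref{th:PC}, with $P$-type excluded by the preceding remark. Your first step (the pipe-parity criterion) and your genericity, calibration and in-region worries are extra and not needed: the four $\Delta_{I_r}$ are already nonzero because they label dominant regions, and both sign relations concern the global matrix, so nothing has to be restricted to $Y_b^{\In}$.
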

We now consider the labeled stones not belonging to $\mathcal{B}$. In Example~\ref{ex:reducepd}, these are precisely the labeled stones \tikz[baseline=(char.base)]{
\node[draw, circle, minimum size=1.0em, inner sep=0pt] (char) {\small 1};} \tikz[baseline=(char.base)]{
\node[draw, circle, minimum size=1.0em, inner sep=0pt] (char) {\small 2};} \tikz[baseline=(char.base)]{
\node[draw, circle, minimum size=1.0em, inner sep=0pt] (char) {\small 3};}.

\begin{proposition}\label{prop:boundary}
For labeled stones not belonging to $\mathcal{B}$, the associated configuration is degenerate: some of the indices $i,j,k,l$ coincide and hence some of the index sets $I_1,I_2,I_3,I_4$ coincide. In this case, the Pl\"ucker relation associated with this configuration reduces to
\[
\Delta_{I_1}(A)\Delta_{I_3}(A)
=
\Delta_{I_2}(A)\Delta_{I_4}(A).
\]
This proves that all such labeled stones are white.
\end{proposition}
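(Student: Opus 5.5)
The plan is to argue box by box through the in-region filtration of Definition~\ref{def:inout}, in the same way Corollary~\ref{cor:wb} was obtained for boxes in $\mathcal{B}$. Fix a labeled stone $b\notin\mathcal{B}$. Choose a reading order in which $Y_b^{\In}$ is read first, so that $\w^{\oplus}=\w^{\In}\w^{\Out}$. By Remark~\ref{re:v}, the colour of $b$ is then decided by $\v^{\In}$ alone. In the pipe dream restricted to $Y_b^{\In}$, follow the two pipes entering $b$ back to the south-east boundary, and record the indices $i,j,k,l$ that they determine.

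The first step is to show degeneracy. Since $b\notin\mathcal{B}$, the two pipes are not pipes of two different blocks meeting transversally with four distinct endpoints. We split into two cases. In the first case, $b$ is one of the boundary boxes: its south-east corner lies on the boundary of the composite diagram, or it belongs to a consecutive run to the left of or above such a box. Near the boundary, one of the pipes turns immediately onto the boundary label, so one of $j,l$ coincides with one of $i,k$. In the second case, both pipes belong to the same block, and Lemma~\ref{lem:pipe} shows they close up inside that embedded $\Le$-diagram, again forcing a coincidence.

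The second step treats the degenerate configurations. Once two of the indices coincide, at least two of the sets $I_1,\dots,I_4$ from the proof of Theorem~\ref{th:PC} coincide, for example $I_1=I_2$ or $I_2=I_3$. The three-term Plücker relation then loses a term, or the leftover term vanishes because its index set has a repeated element. What remains is $\Delta_{I_1}\Delta_{I_3}=\Delta_{I_2}\Delta_{I_4}$, with no sign flip. This is the same relation as in the $O$-type case, and by the reasoning of Corollary~\ref{cor:wb} it corresponds to a length-increasing step, so the stone is white.

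An alternative for the first case is to check that $Y_b^{\In}$ still satisfies the $\Le$-property, after which Remark~\ref{rem:Le-in-region} applies directly. That route requires knowing that the boundary runs come before any $T$-type box in the nested filtration.

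I expect the main obstacle to be the degeneracy claim itself: showing that $b\notin\mathcal{B}$ really forces an index coincidence, rather than producing a nested ($P$-type) configuration. To settle this I would first use that the labeled stones are placed from right to left and from bottom to top, so the pipe through $b$ reaching the south-east boundary without any turn must end at the diagonal boundary label of $b$. Second, I would use the non-nesting property from the Remark following Definition~\ref{def:OTP} to rule out the $P$-type pattern.
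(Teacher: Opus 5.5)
Your main line is exactly the paper's argument, which is contained in the statement itself: $b\notin\mathcal{B}$ forces a coincidence among $i,j,k,l$, so two of $I_1,\dots,I_4$ coincide, the Pl\"ucker relation collapses to $\Delta_{I_1}\Delta_{I_3}=\Delta_{I_2}\Delta_{I_4}$, and the stone is white by the reasoning of Corollary~\ref{cor:wb}. The trouble is in what you build around it.

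\textbf{Degeneracy is definitional, and your case 2 is wrong.} $\mathcal{B}$ is \emph{defined} as the set of labeled stones whose two pipes carry pairwise distinct $i,j,k,l$, so $b\notin\mathcal{B}$ is degeneracy by definition. Ruling out $P$-type belongs to Corollary~\ref{cor:wb}, not to this proposition. (In the in-region reading $P$-type is automatic anyway: the two pipes leave $Y_b^{\In}$ at the row and column labels of $b$, which are the smallest and largest boundary labels of $Y_b^{\In}$.)

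Your second case is empty and wrongly argued. Elbows occur only inside embedded diagrams, so by the argument behind Lemma~\ref{lem:pipe} a pipe of $A_p$ travels only along rows and columns indexed by $\hat{\mathcal I}[A_p]$. Hence the two global pipes through a labeled stone always belong to different blocks. Moreover, two pipes of one block need not share an index: the $[1,3]$- and $[4,6]$-pipes of $A_1$ in Example~\ref{ex:l5} cross at the white stone of $\mathcal L_{\v_1,\w_1}$.

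Read globally, every labeled stone therefore has four distinct indices. For instance, stones 3 and 5 of Example~\ref{ex:cd} are both crossed by the $[1,3]$- and $[2,5]$-pipes, yet differ in colour. So degeneracy exists only in the in-region reading, and there no pipe ``turns''. It occurs exactly when all boxes below $b$ in its column, or all boxes to its right in its row, are filled. Then one pipe runs straight from the boundary through $b$ and leaves $Y_b^{\In}$ with its own label.

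\textbf{Your alternative route is false.} Take the Hirota solution with chords $[1,8],[2,5],[3,6],[4,7]$, which gives the full $4\times4$ diagram. The box with pivot $1$ and non-pivot $7$ is degenerate and white. Yet its in-region contains black stones at (pivot, non-pivot) $=(2,6),(2,7),(3,7)$. So $Y_b^{\In}$ violates the $\Le$-property, Remark~\ref{rem:Le-in-region} does not apply, and boundary runs need not precede $T$-type boxes.

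\textbf{The real content is never proved, here or in the paper.} The substantive step is that a sign-free relation forces a length-increasing step. You and the paper both inherit it from Corollary~\ref{cor:wb}. Here the relation is a tautology: $k=l$ gives $I_1=I_4$ and $I_2=I_3$. The step can be checked directly.

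\emph{Row case.} Read $Y_b^{\In}$ with $b$'s row last. Suppose the boxes to the right of $b$, labelled $s_p,\dots,s_{q-1}$, are all filled. Then $v_{(k-1)}=u\,s_ps_{p+1}\cdots s_{q-1}$, where $u$ is a product of letters $s_a$ with $a\ge p+1$ coming from the lower rows. Hence $v_{(k-1)}(q)=p<u(q+1)=v_{(k-1)}(q+1)$, so the step at $b$ (labelled $s_q$) is length-increasing.

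\emph{Column case.} Suppose instead the boxes below $b$, labelled $s_{q+1},\dots,s_{q+t}$, are filled. Read first the part of $Y_b^{\In}$ to the right of $b$'s column, whose letters are $s_a$ with $a\le q+t-1$. Then read that column upward. This gives $v_{(k-1)}(q)\le q+t<q+t+1=v_{(k-1)}(q+1)$, so the step at $b$ is again length-increasing.
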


Combining the preceding results, we obtain the main theorem of this section.
\begin{theorem}\label{th:main}
The composite diagram associated with the $\kappa$-direct sum admits a \emph{canonical} filling into a Go-diagram, determined by explicit local rules. In particular,
\begin{itemize}
\item boxes inside the embedded $\Le$-diagrams remain unchanged;
\item the remaining boxes in $\mathcal{B}$ are filled according to Corollary~\ref{cor:wb};
\item boxes not belonging to $\mathcal B$ are filled with white stones.
\end{itemize}
This provides an explicit realization of the Go-diagram associated with the $\kappa$-direct sum.
\end{theorem}
\begin{corollary}\label{cor:Le}
Let $A=\widehat{\bigoplus}_{k=1}^K A_k$. Then the following conditions are equivalent:
\begin{enumerate}[(i)]
\item the composite diagram $\widehat{\bigoplus}_{k=1}^K\mathcal{L}_{\v_{k},\w_{k}}$ is a $\Le$-diagram;
\item no $T$-type $X$-crossings occur in $\mathcal{B}$;
\item the blocks $A_k$ are mutually non-crossing.
\end{enumerate}
\end{corollary}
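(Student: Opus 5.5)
The plan is to establish the cycle of implications (i)$\Rightarrow$(ii)$\Rightarrow$(iii)$\Rightarrow$(i). Throughout I use the canonical Go-diagram filling of Theorem~\ref{th:main}. The embedded $\Le$-diagrams contain only white stones, and boxes outside $\mathcal{B}$ are white by Proposition~\ref{prop:boundary}. Hence a black stone can appear only at a box of $\mathcal{B}$, and by Corollary~\ref{cor:wb} it is black exactly when the corresponding $X$-crossing is of $T$-type. I will also use that a Go-diagram is a $\Le$-diagram precisely when the underlying distinguished subexpression $\v^{\oplus}$ is a PDS, i.e. when there are no black stones.

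For (i)$\Rightarrow$(ii): if the composite diagram is a $\Le$-diagram, then $\v^{\oplus}$ is a PDS and there are no black stones. By Corollary~\ref{cor:wb}, no box of $\mathcal{B}$ carries a $T$-type $X$-crossing. Conversely, for (ii)$\Rightarrow$(i): if no $T$-type crossings occur, all stones are white. Then every step of $\v^{\oplus}$ is length-increasing, so $\v^{\oplus}$ is a PDS and the diagram is a $\Le$-diagram. Remark~\ref{rem:Le-in-region} together with the in-region filling procedure shows that white stones throughout is compatible with the $\Le$-property.

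For (ii)$\Leftrightarrow$(iii), I translate $T$-type crossings in $\mathcal{B}$ into crossing chords. Each box of $\mathcal{B}$ is an intersection of two pipes from \emph{different} blocks: by Lemma~\ref{lem:pipe}, pipes from the same block only meet inside that block's embedded $\Le$-diagram. These two pipes carry line-solitons $[i,j]$ and $[k,l]$ with $i<k$, and they are of $T$-type exactly when $i<k<j<l$, i.e. when the chords cross. So (ii) fails only if two blocks are crossing, which gives (iii)$\Rightarrow$(ii).

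The converse needs surjectivity: every crossing pair of chords from different blocks must be realized by some box of $\mathcal{B}$. I would argue this by counting. The Deodhar component of $\v^{\oplus}$ has dimension $\#\{\text{white}\}-\#\{\text{black}\}$ (no wait: number of empty boxes plus black stones, per \cite{MR:04}), while the $\kappa$-direct sum has the additive parameter count $\sum_k\dim A_k$. Lemma~\ref{lem:chord} then says that each inter-block chord crossing contributes an extra dimension to the TNN realization of $\pi$. Comparing these counts forces every inter-block crossing to appear as a black stone. Alternatively, one could directly trace the two crossing chords as pipes and show they must intersect in a box outside both embedded diagrams.

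The main obstacle is this surjectivity step, because a pipe dream need not record every chord crossing as a box crossing. As a safeguard I would also invoke Proposition~\ref{pro:V1K}. If blocks cross, the solution is singular, so some Plücker coordinate of $A$ is negative and $A\notin\Gr(N,M)_{\ge0}$. A Go-diagram without black stones would place $A$ in a positroid cell, and the parameters would be positive since the blocks are TNN, giving a contradiction. Hence some black stone, i.e. some $T$-type crossing in $\mathcal{B}$, exists, which establishes (ii)$\Rightarrow$(iii).
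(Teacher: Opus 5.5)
Your treatment of (i)$\Leftrightarrow$(ii) is the paper's: black stones can only sit in $\mathcal B$, and Corollary~\ref{cor:wb} makes them exactly the $T$-type boxes. The gaps are in (ii)$\Leftrightarrow$(iii). The paper does not prove this directly; it argues through regularity (Proposition~\ref{pro:V1K}) and the $\Le$-parametrization.

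In your direct argument for (iii)$\Rightarrow$(ii), the step ``$T$-type, i.e.\ the chords cross'' is not valid. You are right that the two pipes through a box of $\mathcal B$ come from different blocks. However, the chord diagram draws a chord with $\pi(a)>a$ above the line and one with $\pi(a)<a$ below, so an upper and a lower chord never meet. The two pipes at a box of $\mathcal B$ can be one of each:
\begin{itemize}
\item stone 5 in Example~\ref{ex:2me} is the $T$-type crossing of the pipes $1\to 3$ (upper) and $5\to 2$ (lower);
\item in case (b) of the introduction, the black box carries $k\to l$ and $j\to i$.
\end{itemize}
So a $T$-type box does not by itself exhibit an intersecting pair of chords.

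For (ii)$\Rightarrow$(iii), your counting comparison forces nothing. All empty boxes lie in the embedded diagrams, so the Deodhar component has dimension $\sum_k\dim A_k+\#\{\text{black}\}$, and comparing this with $\sum_k\dim A_k$ gives no information. Your safeguard rests on a false containment. A Go-diagram without black stones only places $A$ in a component $\cong(\mathbb R^{*})^{\#\{\text{empty}\}}$, and only the positive part of that component is the positroid cell. For instance, $A=(1,a)$ with $a<0$ has the one-box $\Le$-diagram but is not TNN. Nothing you prove makes the global Marsh--Rietsch parameters of $\widehat{\bigoplus}A_k$ positive just because the blocks are TNN. The paper's phrase ``regular soliton solutions are parametrized by $\Le$-diagrams'' leaves this same direction implicit. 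Your safeguard is that route, and the justification you add is what fails.

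The missing idea is a length count that handles both directions combinatorially. For any distinguished $\mathbf v\prec\mathbf w$ with product $v$, one has $\ell(v)=\#\{\text{white}\}-\#\{\text{black}\}$ and $\ell(w)=\#\{\text{white}\}+\#\{\text{black}\}+\#\{\text{empty}\}$, so
\[
\ell(w)-\ell(v)=\#\{\text{empty}\}+2\,\#\{\text{black}\}.
\]
The composite diagram and the $\Le$-diagram of $\pi=\prod_k\pi(A_k)$ share $\mathbf w$ (the shape) and $v=\pi w$. For the $\Le$-diagram, which is the PDS, this quantity equals the cell dimension $d$ of Lemma~\ref{lem:chord}. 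In the composite, $\#\{\text{empty}\}=\sum_k\dim A_k$. Now apply Lemma~\ref{lem:chord} to each block; pivots and lower cusps are pointwise data, preserved by the order-preserving $\phi$. This gives that $d-\sum_k\dim A_k$ equals the number of intersecting pairs of chords from different blocks. Hence $2\,\#\{\text{black}\}$ equals the number of inter-block chord crossings. This yields (i)$\Leftrightarrow$(iii) in both directions, and with Corollary~\ref{cor:wb} also (ii). In Example~\ref{ex:2me}, $9-5=4$ inter-block crossings correspond to two black stones.

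The count also shows that your ``surjectivity'' target is the wrong one. There are half as many black stones as inter-block crossings, and the two pipes at a black stone need not form an intersecting pair. You only need existence of a black stone, not a correspondence.
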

\begin{proof}
The equivalence of (i) and (ii) follows from Corollary~\ref{cor:wb}, since black stones occur precisely at $T$-type $X$-crossings. Proposition~\ref{pro:V1K} implies that condition (iii) is equivalent to the regularity of the corresponding soliton solution. Since regular soliton solutions are parametrized by $\Le$-diagrams, condition (iii) is equivalent to (i).
\end{proof}
This characterization applies in particular to KP soliton solutions under the Gel'fand--Dickey reductions constructed via vertex operators \cite{HY:24}.

\section{Examples}\label{sec:5}
In this section, we illustrate the construction of the Go-diagram associated with the $\kappa$-direct sum through several examples.
\subsection{A case of $\text{Gr}(3,7)$}
\begin{example}\label{ex:2me}
Continuing Example~\ref{ex:l5}, we consider the composite diagram associated with the $\kappa$-direct sum $A_{1}\widehat{\bigoplus}A_{2}$. From the pipe dream, we identify the following $X$-crossings:
\begin{itemize}
\item[{\tikz[baseline=(char.base)]{
  \node[draw, circle, minimum size=1.0em, inner sep=0pt] (char) {\small 4};
} :}] the $T$-type $X$-crossing formed by the $[2,5]$- and $[4,6]$-solitons;
\item[{\tikz[baseline=(char.base)]{
  \node[draw, circle, minimum size=1.0em, inner sep=0pt] (char) {\small 5};
} :}] the $T$-type $X$-crossing formed by the $[1,3]$- and $[2,5]$-solitons;
\item[{\tikz[baseline=(char.base)]{
  \node[draw, circle, minimum size=1.0em, inner sep=0pt] (char) {\small 6};
} :}] the $O$-type $X$-crossing formed by the $[1,3]$- and $[5,7]$-solitons.
\end{itemize}
By Corollary~\ref{cor:wb}, the labeled stones \tikz[baseline=(char.base)]{
\node[draw, circle, minimum size=1.0em, inner sep=0pt] (char) {\small 4};} and \tikz[baseline=(char.base)]{
\node[draw, circle, minimum size=1.0em, inner sep=0pt] (char) {\small 5};} are black, while all others are white. This determines the Go-diagram associated with $A_{1}\widehat{\bigoplus}A_{2}$.
\begin{figure}[H]
  \begin{minipage}[t]{1\linewidth}
  \centering
  \includegraphics[height=1.4cm,width=7cm]{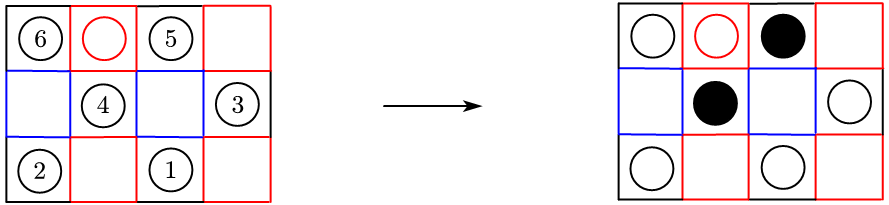}
  \end{minipage}
\caption{The Go-diagram associated with $A_{1}\widehat{\bigoplus}A_{2}$.\label{fig:GO37}}
\end{figure}
\end{example}

\subsection{Hirota $N$-solitons}
For Hirota $N$-soliton solutions, each line-soliton $[i_{k},j_{k}]$ corresponds to a vertex operator $V[A_{k}]$ with $|\mathcal{I}[A_k]|=2$, where each block $A_k\in \mathrm{Gr}(1,2)_{\ge 0}$ is of the form $(1,a_{k,j_k})$ with $a_{k,j_k}>0$. In this case, the Go-diagram associated with the $\kappa$-direct sum can be determined directly from
the chord diagram. Indeed, each pipe corresponds to a chord, and crossing chords correspond precisely to black stones in the Go-diagram. Note that for Hirota $N$-solitons,
the chord diagram is symmetric with respect to the horizontal axis.
\begin{example}\label{ex:hirota}
Consider the Hirota 4-soliton solution with the index sets
\begin{align*}
\hat{\mathcal{I}}[A_1]=\{1,4\},\qquad \hat{\mathcal{I}}[A_2]=\{2,6\},\qquad \hat{\mathcal{I}}[A_3]=\{3,8\},\qquad \hat{\mathcal{I}}[A_4]=\{5,7\}.
\end{align*}
The associated matrices $A_{k}\in\text{Gr}(1,2)_{\geq0}$ are given by
\begin{align*}
A_{1}=\left(
\begin{array}{cc}
1 & a_{1,4}
\end{array}
\right),\qquad
A_{2}=\left(
\begin{array}{cc}
1 & a_{2,6}
\end{array}
\right),\qquad
A_{3}=\left(
\begin{array}{cc}
1 & a_{3,8}
\end{array}
\right),
\qquad
A_{4}=\left(
\begin{array}{cc}
1 & a_{4,7}
\end{array}
\right),
\end{align*}
whose $\kappa$-direct sum takes the form
\begin{align*}
\widehat{\bigoplus}_{k=1}^4A_{k}=\left(
\begin{array}{cccccccc}
1 & & &a_{1,4} & & & &\\
  &1 & & & &a_{2,6} & &\\
  & & 1& & & & &a_{3,8}\\
 & & & &1 & & a_{4,7}&\\
\end{array}
\right)\in\text{Gr}(4,8).
\end{align*}
The associated chord diagram is determined by the permutation $\pi=(1,4)(2,6)(3,8)(5,7)$.
\begin{figure}[H]
  \begin{minipage}[t]{1\linewidth}
  \centering
  \includegraphics[height=1.3cm,width=4cm]{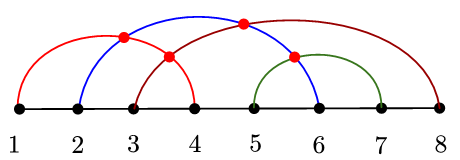}
  \end{minipage}
\end{figure}
\noindent The resulting Go-diagram is shown below.
\begin{figure}[H]
  \begin{minipage}[t]{1\linewidth}
  \centering
  \includegraphics[height=2.5cm,width=7.3cm]{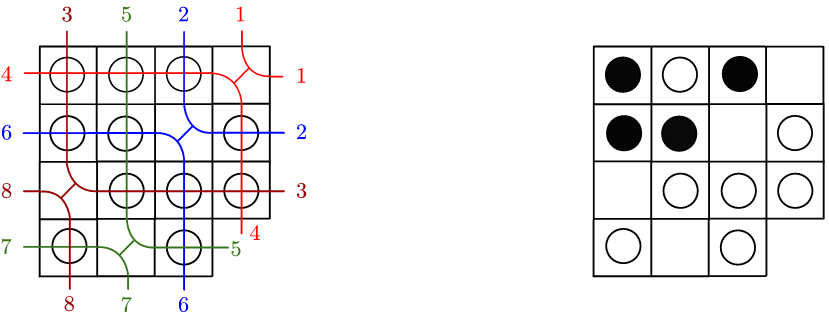}
  \end{minipage}
\end{figure}
\end{example}

\subsection{Inverse problem for the $\kappa$-direct sum}
The \emph{inverse problem} for the $\kappa$-direct sum is as follows: given a Go-diagram whose associated permutation consists of multiple cycles, can it be uniquely decomposed into blocks determined by these cycles, while all remaining boxes are filled with either \raisebox{0.12cm}{\hskip0.14cm\circle*{5.5}\hskip-0.1cm} or \raisebox{0.12cm}{\hskip0.14cm\circle{5.5}\hskip-0.1cm}? The answer is negative in general, as shown by the following example.

As discussed in Section~\ref{sec:projections}, a fixed permutation $\pi$ may correspond to different Go-diagrams. For example, the distinguished subexpressions
\begin{align*}
\v_{1}=s_211s_2\quad\quad\text{and}\quad\quad \v_{2}=1111
\end{align*}
of $\w=s_{2}s_{3}s_{1}s_{2}$ determine the same permutation $\pi=(1,3)(2,4)$, but yield distinct Go-diagrams.
\begin{figure}[H]
  \begin{minipage}[t]{1\linewidth}
  \centering
  \includegraphics[height=1.1cm,width=5.2cm]{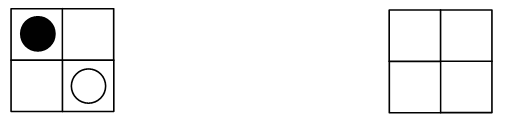}
  \end{minipage}
\end{figure}

\bigskip
\noindent
{\bf Acknowledgements.}
The author would like to thank Yuji Kodama for many discussions and useful suggestions on this work. The author also thanks Chuanzhong Li for his constant encouragement. This work was completed during the author's visit to Nagoya University, and the author is grateful for the opportunity and research environment provided during the visit. The visit was supported by the Yuji Kodama Scholarship at Shandong University of Science and Technology.

 \bigskip
 \noindent
{\bf Declarations.}
\begin{itemize}
\item No data availability statement applies to this article.
\item The author declares no conflict of interest.
\end{itemize}


\raggedright


\end{document}